\newtheorem{prop}{Proposition}
\newtheorem{thm}[prop]{Theorem}
\newtheorem{lemma}[prop]{Lemma}
\newtheorem{cor}[prop]{Corollary}
\newtheorem{df}[prop]{Definition}
\newtheorem{rem}[prop]{Remark}
\newtheorem{ex}[prop]{Example}
\newcommand{\Q}{\mathbb{H}}
\newcommand{\R}{\mathbb{R}}
\newcommand{\C}{\mathbb{C}}
\newcommand{\inner}[1]{\left<\smash[t]{#1}\right>}
\newcommand{\tr}{\mathrm{Tr}}
\newcommand{\zero}{\phantom{\,0\,}}
\newcommand{\Z}{\mathbb{Z}}
\newcommand{\U}{\mathrm{U}}
\newcommand{\SU}{\mathrm{SU}}
\newcommand{\id}{\mathsf{id}}
\newcommand{\sbinom}[2]{\genfrac{[}{]}{0pt}{}{\,#1\,}{\,#2\,}}
\newcommand{\D}{\slashed{D}}
\newcommand{\Cl}{\mathcal{C}\ell}
\begin{document}

\title{\vspace*{-1cm}The Standard Model in Noncommutative Geometry\\ and Morita equivalence}

\author{\rule{0pt}{15pt}Francesco D'Andrea$^{\,1,2}$ and Ludwik D{\k a}browski$^{\,3}$\\[15pt]
\textit{\small $^1$ Universit\`a degli Studi di Napoli Federico II, P.le Tecchio 80, I-80125 Napoli}\\[2pt]
\textit{\small $^2$ I.N.F.N. Sezione di Napoli, Complesso MSA, Via Cintia, I-80126 Napoli}\\[2pt]
\hspace*{-5pt}\textit{\small $^3$ Scuola Internazionale Superiore di Studi Avanzati (SISSA), via Bonomea 265, I-34136 Trieste}}

\date{}

\maketitle

{\renewcommand{\thefootnote}{}
\footnotetext{%
\hspace*{-6pt}\textit{MSC-class 2010:} Primary: 58B34; Secondary: 46L87, 81T13. \\
\hspace*{10pt}\textit{Keywords:} noncommutative geometry; spectral action; Standard Model.}}

\begin{abstract}\noindent
We discuss some properties of the spectral triple $(A_F,H_F,D_F,J_F,\gamma_F)$ describing the internal
space in the noncommutative geometry approach to the Standard Model, with $A_F=\C\oplus\Q\oplus M_3(\C)$. We show that,
if we want $H_F$ to be a Morita equivalence bimodule between $A_F$ and the associated Clifford algebra, two
terms must be added to the Dirac operator; we then study its relation with the orientability
condition for a spectral triple.
We also illustrate what changes if one considers a spectral triple with a degenerate
representation, based on the complex algebra $B_F=\C\oplus M_2(\C)\oplus M_3(\C)$.
\end{abstract}

\bigskip


\section{Introduction}
In the spectral action approach to (quantum) field theory, the space of the theory
is the product of an ordinary spin manifold $M$ with a finite noncommutative space 
(cf.~\cite{CM08,vS15} and references therein).
States of the system are represented by unit vectors in $L^2(M,S)\otimes H$, where $L^2(M,S)$
are square integrable sections of the spinor bundle $S\to M$ and $H$ is a
finite-dimensional Hilbert space representing the internal degrees of freedom
of a particle.
The algebra containing the observables is 
the tensor product of smooth functions $C^\infty(M)$ on $M$
with certain finite dimensional algebra $A$. More precisely, one has an
``almost commutative'' geometry described by a product
of spectral triples, with Dirac operator constructed from the Dirac operator of $M$ and certain selfadjoint operator (a Hermitian matrix) $D$ on $H$.

A deep algebraic characterization of the space of Dirac spinor fields $L^2(M,S)$
on a spin manifold
is as the Morita equivalence bimodule between $C(M)$ and the algebra $\Cl(M)$ of sections of the Clifford bundle of $M$.
It is natural to investigate if also the finite-dimensional spectral triple of the Standard Model $(A,H,D)$
 describes a (noncommutative) spin manifold, and in particular
if the elements of $H$ are in some sense ``spinors''. This condition
-- which we name ``property (M)'' in Def.~\ref{df:propM} --
can be precisely formulated again in terms of Morita equivalence involving $A$ and certain
noncommutative analogue of $\Cl(M)$, and is satisfied
in some basic examples like e.g.~Einstein-Yang-Mills systems.

We investigate the consequence of such a requirement on the finite non-commutative
geometry that should describe the Standard Model of elementary particles.
We shall show that in order to satisfy such a condition, we are forced to introduce two additional terms
in the Dirac operator, and consider a non-standard grading.
In order to get the correct experimental value of the Higgs mass, various modifications of the original model have been proposed:
to enlarge the Hilbert space thus introducing new fermions \cite{Ste09};
to turn one of the elements of the internal Dirac operator into a field by hand \cite{CC12}
rather than getting it as a fluctuation of the metric; 
to break (relax) the 1st order condition \cite{CCvS13a,CCvS13b},
thus allowing the presence of new terms in the Dirac operator;
to enlarge the algebra \cite{DLM13} and use a twisted spectral triple \cite{DM14} with bounded twisted commutators.
In the present paper from the Morita condition and a different grading we get two extra fields (without breaking any of the other conditions). 
We postpone to future work a discussion of the physical implications and, in particular, how the Higgs mass is modified.

Besides the original model, which is built around the real algebra:
\begin{equation}\label{eq:AF}
A_F=\C\oplus\Q\oplus M_3(\C) \;,
\end{equation}
where $\Q$ denotes the division ring of quaternions, we shall also consider the complex algebra 
\begin{equation}\label{eq:BF}
B_F=\C\oplus M_2(\C)\oplus M_3(\C) \;,
\end{equation}
which has an interesting interpretation from quantum group theory.
Namely, it is the semisimple part \cite{Coq97} of a certain quotient of
$U_q(sl(2))$ for $q$ a 3rd root of unity. As explained in \cite{DNS98},
the dual compact quantum group $Q$ fits into the exact sequence 
$$
1\to Q\to SL_q(2)\to SL(2,\C)\to 1 \;.
$$
Recall that $SL(2,\C)$ is a double covering of the restricted Lorentz group. One might argue
that trading a commutative space for an almost commutative one, the Lorentz group should be replaced
by a compact quantum group covering it, which takes into account the symmetries of the internal space
as well. (For preliminary studies of Hopf-algebra symmetries of $A_F$/$B_F$
see \cite{Kas95,Coq97,DNS98}; for compact quantum group symmetries see \cite{BBD10,BDDD11}.)

We show that a minimal modification in the representation allows to replace $A_F$ by $B_F$
without changing the content of the theory. In particular, at the representation
level the complexification $\pi(A_F)_{\C}$ of $\pi(A_F)$ is the minimal unitalization of the degenerate representation $\pi(B_F)$
(the representations, here denoted by the same symbol which  we will omit later on, are
introduced in \S\ref{sec:3});
adding the identity operator (which commutes with the Dirac operator) doesn't produce new fields.

\smallskip

The plan of the paper is the following. In \S\ref{sec:setup} we review some basic
ideas of noncommutative geometry \cite{Con94,GVF01,Lan02}, with a view to applications to gauge theory \cite{CM08,vS15}. In \S\ref{sec:3}, we review the
derivation of the finite spectral triple of the Standard Model and discuss an
alternative based on the complex algebra $B_F$ (\S\ref{sec:Hzero}). In \S\ref{sec:4},
we describe the most general Dirac operator satisfying the 1st order condition
(which is necessary for the ``property (M)'' in Def.~\ref{df:propM}),
and in \S\ref{sec:5} two possible grading operators; the Dirac operator of
Chamseddine-Connes \cite{CC97,Con06,CCM06,CM08} appears in \S\ref{sec:CC}.
In \S\ref{sec:6}, we discuss the natural condition for a spectral triple to be
``spin$^c$'', based on Morita equivalence, and derive some necessary conditions
for this to be satisfied; we show that in order to satisfy these condition one
has to introduce two additional terms in the Dirac operator of Chamseddine-Connes,
one mixing $e_R$ with $\bar\nu_R$ and one mixing leptons with quarks {(for a study of lepto-quarks in this setting, one can see \cite{PSS97})}. The last term
is also necessary in order to have an irreducible spectral triple, cf.~\S\ref{sec:8}.
In \S\ref{sec:orient}, we study the problem of orientability for the modified Dirac operator.
In \S\ref{sec:9} we discuss the irreducibility of the Pati-Salam model. We conclude in \S\ref{sec:10} with some final remarks.

\section{Mathematical set-up}\label{sec:setup}

Let $M$ be a closed oriented Riemannian manifold,
$C(M)$ and $C^\infty(M)$ the algebras of complex-valued continuous resp.~smooth functions,
and $\Cl(M)$ the algebra of continuous sections of the bundle of (complexified) Clifford algebras:
as a $C(M)$-module, it is equivalent to the module of continuous sections of the bundle $\Lambda^\bullet\hspace{1pt}T^*_{\C}M\to M$,
but with product defined by the Clifford multiplication.
The manifold $M$ is spin$^c$ if and only if there exists a Morita equivalence $\Cl(M)$-$C(M)$ bimodule $\Sigma$
(see e.g.~\S1 of \cite{Var06}). Such a $\Sigma$ is automatically projective and finitely generated, hence by
Serre-Swan theorem $\Sigma=\Gamma(S)$ is the module of sections of some complex vector bundle $S\to M$,
the \emph{spinor bundle} in the conventional picture from differential geometry.

Once we have $S$, we can introduce the Dirac operator $\D$, a self-adjoint operator on the Hilbert space
$L^2(M,S)$ of square integrable sections of $S\to M$ \cite[\S1.4]{Var06}. Let $\pi$ be the representation
of $C(M)$ on $L^2(M,S)$ by pointwise multiplication and $c$ the representation of $\Cl(M)$ by Clifford multiplication
(see e.g.~\cite{GVF01} or \cite{Var06} for the details).
The data
\begin{equation}\label{eq:can}
\big(\, C^\infty(M),\pi,L^2(M,S),\D \,\big)
\end{equation}
is the prototypical example of commutative spectral triple, and one can indeed prove under some additional assumptions
that any commutative spectral triple comes from such a construction \cite[Thm.~1.2]{Con08}.
The spectral triple \eqref{eq:can} is $\Z_2$-graded if $M$ is even dimensional.

There is an algebraic characterization for spin manifolds as well: a spin$^c$ manifold $M$ is spin if and only
if there exists a real structure for the spectral triple \eqref{eq:can} (whose definition we recall below in the
finite-dimensional case).

Let us observe that, for any $f\in C^\infty(M)$, $i[\D,\pi(f)]=c(\mathrm{d}f)$ is the operator of
Clifford multiplication by $\mathrm{d}f$ and such operators generate $\Cl(M)$.
In the even case, the grading $\gamma$ belongs to $\Cl(M)$.

\smallskip

For later use, we recall the definition of spectral triple in the finite-dimensional case, adapted to our purposes.

\begin{df}
A finite dimensional spectral triple $(A,\pi,H,D)$ 
is given by a finite dimensional complex Hilbert space $H$,
a Hermitian operator $D$ on $H$, and a real or complex $C^*$-algebra with a faithful $*$-representation
$\pi:A\to\mathrm{End}_{\C}(H)$.
The spectral triple is \emph{even} if $H$ is $\Z_2$-graded, $\pi(A)$ is even
and $D$ is odd; we denote by $\gamma$ the grading operator. The spectral triple is \emph{real} if there is
an antilinear isometry $J$ on $H$ -- called the \emph{real structure} -- satisfying
$$
J^2=\epsilon\,\id_H \;,\qquad\quad
JD=\epsilon' DJ \;,\qquad\quad
J\gamma=\epsilon''\gamma J \quad\text{(only in the even case)}
$$
for some $\epsilon,\epsilon',\epsilon''\in\{\pm 1\}$, together with the 0th order condition
$$
[\pi(a),J\pi(b)J^{-1}]=0 \qquad\forall\;a,b\in A,
$$
and the 1st order condition:
\begin{equation}\label{eq:1storder}
[[D,\pi(a)],J\pi(b)J^{-1}]=0 \qquad\forall\;a,b\in A.
\end{equation}
\end{df}

In order to simplify the notations, 
we will often omit the representation symbol $\pi$ and set $\gamma:=1$ if we have an odd spectral triple.
Note that we don't loose generality by assuming that the representation is faithful. Note also that $H$ is complex even when $A$ is real. 
The values of $\epsilon,\epsilon',\epsilon''$ determine the \emph{KO-dimension}
of the spectral triple (according to the table that is, for example, in \cite[\S3.8]{Var06}).

\begin{df}
Let $(A,H,D,\gamma)$ be a spectral triple (with $\gamma:=1$ if the spectral triple is odd) and $\Omega^1(A):=\mathrm{Span}\{a[D,b]:a,b\in A\}$.
We call $\Cl(A)_{\mathrm{o}}$ the complex $*$-algebra generated by $A$ and $\Omega^1$, and $\Cl(A)_{\mathrm{e}}$ the complex $*$-algebra generated by $\Cl(A)_{\mathrm{o}}$ and $\gamma$.
\end{df}

This is similar to Definition 3.19 of \cite{LRV12} ($\Cl(A)_{\mathrm{o}}$ is their $\mathcal{C_D(A)}$
in the even case, while in the odd case they double the Hilbert space to get a $\Z_2$-graded algebra).

\medskip

Let $A^\circ :=   JAJ^{-1}$ be the opposite algebra (thought of as a subalgebra of $\mathrm{End}_{\C}(H)$).
Recall that a linear map \mbox{$\pi_D:(A\otimes A^\circ)\otimes A^{\otimes n}\to\mathrm{End}_{\C}(H)$} is given by
$$
c=\sum_{\mathrm{finite}}(a_0^i\otimes b_0^i)\otimes a_1^i\otimes\ldots\otimes a_n^i\;\mapsto\;
\pi_{D}(c):=
\sum_{\mathrm{finite}}a_0^ib_0^i[D,a_1^i]
\ldots[D,a_n^i] \;,
$$
for all $a_j^i\in A$ and $b_0^i\in A^\circ$. 
By restriction (take $b_0^i=1$) we get a surjective map $$\pi_D:\bigoplus\nolimits_{n\geq 0} A^{\otimes n+1}\to \Cl(A)_{\mathrm{o}}$$ which we denote by the same symbol. Note that $\gamma$ is in the image of the latter map if and only if the two Clifford algebras coincide: $\Cl(A)_{\mathrm{e}}=\Cl(A)_{\mathrm{o}}$. This in particular happens when the spectral triple is orientable, cf.~below.

\begin{df}\label{df:OR}
Let $n\geq 0$. 
A spectral triple is \emph{orientable} (resp.~\emph{orientable in a weak sense}), with global dimension $\leq n$, if there exists a Hochschild cycle with coefficients in $A$ (resp.~in $A\otimes A^\circ$) such that $\pi_{D}(c)=\gamma$.
\end{df}

\noindent
Note that $c$ defines a class $[c]\in H\!H_n(A)$ (resp.~$[c]\in H_n(A,A\otimes A^\circ)$).
For a finite-dimensional real or complex $C^*$-algebra,
$H\!H_n(A)$ and $H_n(A,A\otimes A^\circ)$ are zero if $n>0$ (we thank U.~Kr{\"a}hmer for this remark). On the other hand, since $\pi_D$ is only defined on chains, 
rather than on homology classes ($[c]=0\not\Rightarrow\pi_D(c)=0$), it still makes sense to study orientability for arbitrary $n\geq 0$.

The 0th and 1st order conditions imply that $H$ is a $\Cl(A)_{\mathrm{e}}$-$A^\circ$ bimodule.
Indeed $a$ and $[D,a]$ commute with $b^\circ$ for all $a\in A,b^\circ\in A^\circ$, and $\gamma$ commutes with $A^\circ$ since it commutes with $A$ and $J\gamma=\epsilon''\gamma J$.
Inspired by the example \eqref{eq:can} we give then the following definition (much similar to the ``condition $5$'' of \cite{LRV12}):

\begin{df}\label{df:propM}
A spectral triple $(A,H,D,J,\gamma)$ has the \emph{property (M)} (resp.~\emph{property (M) with grading})
if $H$ is a Morita equivalence bimodule between $A^\circ$ and $\Cl(A)_{\mathrm{o}}$ (resp.~$\Cl(A)_{\mathrm{e}}$).
\end{df}

Since $\Cl(A)_{\mathrm{o}}\subset\Cl(A)_{\mathrm{e}}$, clearly the ``property (M) with grading'' is weaker. The two conditions are equivalent if the spectral triple is odd (so $\gamma=1$) or orientable.

\begin{ex}\label{ex}
If $H=A$, $J(a)=a^*$ and $D=0$, the spectral triple
has the property (M).
\end{ex}

\subsection{The gauge group of a real spectral triple}
Let $(A,\pi,H,D,J)$ be a real spectral triple,
and assume that $A$ is a unital and $\pi$ is a unital representation.
Let $\U(A)$ be the group of unitary elements of $A$. Due to the 0th order condition,
the map $\rho:\U(A)\to\mathrm{Aut}_{\C}(H)$ given by
\begin{equation}\label{eq:rhoa}
\rho(u):=uJuJ^{-1},
\end{equation}
is a representation, called \emph{adjoint representation}.

The gauge group $G(A)$ of a real spectral triple is defined as
$$
G(A):=\big\{uJuJ^{-1}:u\in\U(A)\big\} \;.
$$

\begin{ex}
In the spectral triple $(M_n(\C),M_n(\C),0,J)$ of
the Einstein Yang-Mills system the algebra acts by left multiplication, $J(a)=a^*$ is the Hermitian conjugation,
and the gauge group is $G(A)=\mathrm{PU}(n)$.
This spectral triple has the property (M), cf.~Example \ref{ex}.
\end{ex}

\subsection{Spectral triples with a degenerate representation}\label{sec:22}
A necessary and sufficient condition for the map $\rho$ in \eqref{eq:rhoa} to send $\U(A)$ into invertible
operators is that $\rho(1)=1$ (then automatically, $\rho(u^{-1})=\rho(u)^{-1}$).
A sufficient condition is that $\pi$ is a unital representation, that is $\pi(1)=1$.
For a spectral triple with a degenerate representation, the unit of $A$ is not the identity operator
on $H$, and \eqref{eq:rhoa} is in general not a representation of the unitary group $\U(A)$. Here we
explain how to bypass this problem.

Degenerate representations appear for example when one tries to sum a real spectral triple with one which has
no real structure.
Let $(A,\bar\pi_0,\bar H_0,0)$ and $(A,\pi_1,H_1,0,J_1)$ be two finite-dimensional
spectral triples, the latter one real with $J_1^2=1$, and both with the same algebra $A$ and null Dirac operator. Then we can define a new real
spectral triple $(A,\pi,H,0,J)$ as follows. We set
$$
H:=H_0\oplus\bar H_0\oplus H_1 \;,
$$
where $H_0=(\bar H_0)^*$ is the dual space.
We define
$$
\pi(a)(x,y,z)=\big(0,\bar\pi_0(a)y,\pi_1(a)z\big) \;,\qquad
J(x,y,z)=(y^*,x^*,J_1z) \;,
$$
for all $x\in H_0,y\in\bar H_0,z\in H_1$. Note that the representation $\pi$ is degenerate.
If we extend $\bar\pi_0$ and $\pi_1$ trivially to $H$ (as zero on $H_0\oplus H_1$ resp.~$H_0\oplus\bar H_0$),
then we can simply write:
$$
\pi=\bar\pi_0+\pi_1 \;.
$$
Since $\pi$ is degenerate, the map $u\mapsto\pi(u)J\pi(u)J$ is not a representation
of $\U(A)$ in $\mathrm{Aut}(H)$ (it doesn't map $1\mapsto 1$, and $u$ into an invertible operator).
A unitary representation $\rho$ of $U(A)$ on $H$ is given by
\begin{equation}\label{eq:reprho}
\rho(u):=\bar\pi_0(u)+J\bar\pi_0(u)J+\pi_1(u)J\pi_1(u)J \;.
\end{equation}
Indeed $\bar\pi_0(1)=\id_{\bar H_0}$, $\pi_1(1)=\id_{H_1}$ and
$J\bar\pi_0(1)J=\id_{H_0}$. So $\rho(1)=1$. Moreover,
$\bar\pi_0$, $J\bar\pi_0(\,.\,)J$, $\pi_1$ and $J\pi_1(\,.\,)J$
are mutually commuting, hence $\rho$ is multiplicative, and
from $\rho(u)\rho(u^*)=\rho(uu^*)=1$ we deduce that the representation
is also unitary.

Basically, we are considering the direct sum of three representations of $\U(A)$:
the fundamental associated to $\bar\pi_0$ and its dual, and the adjoint
representation of $\pi_1$.

In \S\ref{sec:Hzero} we exhibit a possible choice of the above data,
such that $\U(A)$ contains (strictly) the gauge group $G_{SM}$ of the Standard Model (modulo a finite subgroup), 
and $\rho|_{G_{SM}}$ gives the correct representation.


\section{From particles to algebras}\label{sec:3}

In this section we give a review of the derivation of the data $(A_F,H_F,J_F)$ from
physical considerations, and collect at the end few results about the algebra and its
commutant that will be useful in the following
sections. In some sense, these data reflect the ``topology'' of the finite noncommutative
manifold describing the internal space of the Standard Model, while the Dirac operator encodes
the metric properties. In \S\ref{sec:Hzero} we explain how to get the same gauge group from
a spectral triple based on the complex algebra \eqref{eq:BF} with a degenerate representation.
For simplicity, we work with only one generation of leptons/quarks.

\subsection{The gauge group of the Standard Model}
Let
$$
\widetilde{G}_{SM}:=\U(1)\times \SU(2)\times \SU(3)
$$
be the usual gauge group of the Standard Model, let $H$ be the finite-dimensional Hilbert space
representing the internal degrees of freedom of elementary fermions. Let us recall what is the
representation of $\widetilde{G}_{SM}$.
We have a decomposition $H=F\oplus F^*$, with $F^*$ the dual space of $F$.
The vector space $F$ (for \emph{fermions}) has basis
$$
\binom{\nu_L}{e_L} \qquad
\binom{u^c_L}{d^c_L}_{c=1,2,3} \;\;\qquad
\begin{matrix}
\nu_R &\quad & \{u^c_R\}_{c=1,2,3} \\ e_R && \{d^c_R\}_{c=1,2,3}
\end{matrix}
$$
where $\nu$ stands for neutrino, $e$ for electron, $u^c$ for up-quark and $d^c$
for down-quark with color $c=1,2,3$, $L,R$ stands for left-handed resp.~right-handed.
We will use the label $\uparrow$ for the first particle in each column (neutrino or quark up)
and $\downarrow$ for the second one (electron or quark down).
Left-handed doublets carry the fundamental representation of $\SU(2)$, while
right handed particles are $\SU(2)$-invariant; in particular, the $\uparrow$
particle in each doublet has \emph{weak isospin} $I_{3,w}=1/2$ and the $\downarrow$
has weak isospin $I_{3,w}=-1/2$. The $\SU(2)$-singlets have weak isospin $I_{3,w}=0$.
Each one of the color triplets carry the fundamental representation of $\SU(3)$, the
other particles being $\SU(3)$-invariant. Each particle carries a $1$-dimensional
representation $\lambda\to\lambda^{3Y_w}$ of $\U(1)$, where $Y_w\in\frac{1}{3}\Z$ is the
\emph{weak hypercharge}; it is computed from the formula $Q=I_{3,w}+\frac{1}{2}Y_w$ where
$Q$ is the charge of the particle. The value of $3Y_w$ is given by the following table:
\begin{center}\vspace{8pt}
\begin{tabular}{c|cccccc}
\hline
\textbf{particle} & $\nu_L , e_L$ & $u^c_L , d^c_L$ & $\nu_R$ & $e_R$ & $u^c_R$ & $d^c_R$ \\[2pt]
$3Y_w$ & $-3$ & $1$ & $0$ & $-6$ & $4$ & $-2$ \\
\hline
\end{tabular}
\end{center}\vspace{8pt}
The final representation is actually the direct sum of $n$ copies of $H = F \oplus F^*$,
where $n$ is the number of generations ($n=3$ according to our current knowledge).
For simplicity, the factor taking into account generations will be neglected.

For the computations, it will be convenient to 
encode the complex vector space $F$ of dimension $16$ as $F\simeq M_4(\C)$.
Namely we arrange the particles in a $4\times 4$ matrix as follows
$$
\begin{bmatrix}
\nu_R & u^1_R & u^2_R & u^3_R \\
e_R   & d^1_R & d^2_R & d^3_R \\
\nu_L & u^1_L & u^2_L & u^3_L \\
e_L   & d^1_L & d^2_L & d^3_L
\end{bmatrix} \;.
$$
We put in the first column leptons, in the other three the quarks according to the color.
In the rows we put in the order: $\uparrow R$, $\downarrow R$, $\uparrow L$, $\downarrow L$.

Let $e_{ij}$ the $4\times 4$ matrix with $1$ in position $(i,j)$ and zero everywhere else.
Matrices $\{e_{ij}\}_{i,j=1}^4$ form an orthonormal basis of $M_4(\C)$ for the inner product
associated to the trace $\inner{a,b}=\tr(a^*b)$. With this notation, for example, the state
associated to the unit vector $e_{31}$ represents a left handed neutrino.

In the dual representation $F^*$, one has:
$$
\begin{bmatrix}
\bar\nu_R & \bar e_R & \bar\nu_L & \bar e_L \\
\bar u^1_R & \bar{d}^{\,1}_R & \bar u^1_L & \bar{d}^{\,1}_L \\
\bar u^2_R & \bar{d}^{\,2}_R & \bar u^2_L & \bar{d}^{\,2}_L \\
\bar u^3_R & \bar{d}^{\,3}_R & \bar u^3_L & \bar{d}^{\,3}_L
\end{bmatrix} \;.
$$
Elements of $H$ are then of the form $a\oplus b$ with $a,b\in M_4(\C)$.

Endomorphisms of $F\simeq F^*\simeq M_4(\C)$ are given by $M_4(\C)\otimes M_4(\C)$, where the
first factor acts on $F=M_4(\C)$ via row-by-column multiplication from the left, and the second
via row-by-column multiplication from the right.
From the weak hypercharge table we get the following representation $\pi_{SM}$ of $\widetilde{G}_{SM}$
on $H$:
\begin{multline*}
\pi_{SM}(\lambda,q,m)=\left[\!
\begin{array}{c|c}
\begin{matrix} \;\lambda^3\; & \;0\;  \\ 0 & \bar\lambda^3 \end{matrix}  &
\begin{matrix} \;0\; & \;0\;  \\ 0 & 0 \end{matrix} \\
\hline
\begin{matrix} \;0\; & \;0\; \\ 0 & 0 \end{matrix} & q
\end{array}
\!\right]\otimes
\left[\!
\begin{array}{c|ccc}
\bar\lambda^3 & \;0\; & \;0\; & \;0\; \\
\hline
\begin{matrix} \;0\; \\ 0 \\ 0 \end{matrix} && \lambda m^*
\end{array}
\!\right]
\oplus
\\[5pt]
\oplus
\left[\!
\begin{array}{c|ccc}
\lambda^3 & \;0\; & \;0\; & \;0\; \\
\hline
\begin{matrix} \;0\; \\ 0 \\ 0 \end{matrix} && \bar\lambda m
\end{array}
\!\right]\otimes
\left[\!
\begin{array}{c|c}
\begin{matrix} \;\bar\lambda^3\; & \;0\;  \\ 0 & \lambda^3 \end{matrix}  &
\begin{matrix} \;0\; & \;0\;  \\ 0 & 0 \end{matrix} \\
\hline
\begin{matrix} \;0\; & \;0\; \\ 0 & 0 \end{matrix} & q^*
\end{array}
\!\right]
\end{multline*}
for all $\lambda\in\U(1)$, $q\in\SU(2)$ and $m\in\SU(3)$. Here the first summand acts on $F$ and the
second one on $F^*$.

Computing the kernel of $\pi_{SM}$ one sees that the relevant group is not exactly $\widetilde{G}_{SM}$ but
a quotient by a finite subgroup. Let $\Z_6:=\{\mu\in\C:\mu^6=1\}$ the group of $6$-th roots of unity.
There is an injective morphism of groups:
\begin{equation}\label{eq:Z6}
\Z_6\ni\mu\mapsto (\mu,\mu^3 1_2,\mu^4 1_3)\in \widetilde{G}_{SM} \;.
\end{equation}
One easily checks that the image is exactly the kernel of $\pi_{SM}$, so that there is an exact sequence
$$
1\to\Z_6=\ker\pi_{SM}\to \widetilde{G}_{SM}\to\mathrm{Im}\,\pi_{SM}\to 1
$$
We want to identify $\mathrm{Im}\,\pi_{SM}=\widetilde{G}_{SM}/\Z_6$. Let $G_{SM}:=\mathrm{S}(\U(2)\times\U(3))$
be the group of $\SU(5)$ matrices of the form{\small
$$
\begin{bmatrix}
2\times 2 \text{ block} & 0 \\
0 & 3\times 3 \text{ block}
\end{bmatrix}
$$
}There is a surjective morphism of groups
\begin{equation}\label{eq:insp}
\widetilde{G}_{SM}\ni (\lambda,q,m)\mapsto 
\begin{bmatrix}
\lambda^3q \\ &\bar\lambda^2\bar m
\end{bmatrix}
\in G_{SM} \;,
\end{equation}
where $\bar m=(m^*)^t$.
An element $(\lambda,q,m)$ is in the kernel of the map $\widetilde{G}_{SM}\to G_{SM}$ if and only if
$q=\bar\lambda^31_2$ and $m=\bar\lambda^21_3$. But $\det(q)=\det(m)=\bar\lambda^6$ must be
$1$, hence $\lambda\in\Z_6$ and
$$
\mathrm{Im}\,\pi_{SM}\simeq G_{SM} \;.
$$
The representation can be linearized as follows. 
Let $J:H\to H$ be the antilinear operator $J(a\oplus b):=b^*\oplus a^*$, transforming a particle
into its antiparticle. We can write
$$
\pi_{SM}(\lambda,q,m)=\tilde\pi(\tilde\lambda,q,\tilde m)\,J\tilde\pi(\tilde\lambda,q,\tilde m)J^{-1}
$$
where $\tilde m:=\bar\lambda m\in\U(3)$, $\tilde\lambda:=\lambda^3$ and
$$
\tilde\pi(\tilde\lambda,q,\tilde m)=\left[\!
\begin{array}{c|c}
\begin{matrix} \;\tilde\lambda\; & \;0\;  \\ 0 & \tilde\lambda^* \end{matrix}  &
\begin{matrix} \;0\; & \;0\;  \\ 0 & 0 \end{matrix} \\
\hline
\begin{matrix} \;0\; & \;0\; \\ 0 & 0 \end{matrix} & q
\end{array}
\!\right]\otimes 1
\;\oplus\;
\left[\!
\begin{array}{c|ccc}
\tilde\lambda & \;0\; & \;0\; & \;0\; \\
\hline
\begin{matrix} \;0\; \\ 0 \\ 0 \end{matrix} && \tilde m
\end{array}
\!\right]\otimes 1
$$
The latter can be now extended in an obvious way, by $\R$-linearity, as
a representation of the real algebra $A_F$ in \eqref{eq:AF},
where we think of quaternions as matrices in $M_2(\C)$  of the form
$$
\left[\!\begin{array}{rr} \alpha & \beta \\ \!-\bar\beta & \;\bar\alpha \end{array}\!\right]
\;,\qquad\alpha,\beta\in\C,
$$
so that with this identification $\U(\Q)=\SU(2)$.

\subsection{The data $(A_F,A_F^\circ,H_F,J_F)$}\label{sec:thedata}
With the identifications as in the previous section the Hilbert space becomes $H_F=M_{8\times 4}(\C)$, with elements:
$$
v=\begin{bmatrix} v_1 \\ v_2 \end{bmatrix} \;,\qquad v_1,v_2 \in M_4(\C).
$$
and inner product $\inner{v,w}=\tr(v^*w)$.
Linear operators on $H_F$ are finite sums $L=\sum_i a_i\otimes b_i$, with $a_i\in M_8(\C)$
acting via row-by-column multiplication from the left and $b_i\in M_4(\C)$ acting
via row-by-column multiplication from the right. One easily checks that the adjoint of $L$ is
$L^*=\sum_ia_i^*\otimes b_i^*$, with $a_i^*,b_i^*$ denoting Hermitian conjugation.

The real structure $J_F$ is the operator
\begin{equation}\label{eq:JF}
J_F\begin{bmatrix} v_1 \\ v_2 \end{bmatrix}=\begin{bmatrix} v_2^* \\ v_1^* \end{bmatrix} \;.
\end{equation}
We identify $A_F=\C\oplus\Q\oplus M_3(\C)$ with the subalgebra of elements $a\otimes 1\in\mathrm{End}_{\C}(H_F)$,
with $a$ of the form:
\begin{equation}\label{eq:8t8}
a=\begin{bmatrix}
\left[\!
\begin{array}{c|c}
\begin{matrix} \;\lambda\; & \;0\;  \\ 0 & \bar\lambda \end{matrix} &
\begin{matrix} \;0\; & \;0\;  \\ 0 & 0 \end{matrix} \\
\hline
\begin{matrix} \;0\; & \;0\; \\ 0 & 0 \end{matrix} & q
\end{array}
\!\right] \!\! \\ & \!\!
\left[\!
\begin{array}{c|ccc}
\lambda & \;0\; & \;0\; & \;0\; \\
\hline
\begin{matrix} \;0\; \\ 0 \\ 0 \end{matrix} && m
\end{array}
\!\right]
\end{bmatrix} \;,
\end{equation}
with $\lambda\in\C$, $q\in\Q$ and $m\in M_3(\C)$ (with zeros on the off-diagonal blocks).

We denote by $A_F^\circ=J_FA_FJ_F$ the subalgebra of elements $\mathrm{End}_{\C}(H_F)$
of the form:
$$
a^\circ=
\begin{bmatrix} 1 & 0 \\ \;0\; & \;0\; \end{bmatrix}\otimes
\left[\!
\begin{array}{c|ccc}
\lambda & \;0\; & \;0\; & \;0\; \\
\hline
\begin{matrix} \;0\; \\ 0 \\ 0 \end{matrix} && m
\end{array}
\!\right]
+\begin{bmatrix} 0 & \;0\; \\ \;0\; & 1 \end{bmatrix}\otimes\left[\!
\begin{array}{c|c}
\begin{matrix} \;\lambda\; & \;0\;  \\ 0 & \bar\lambda \end{matrix}  &
\begin{matrix} \;0\; & \;0\;  \\ 0 & 0 \end{matrix} \\
\hline
\begin{matrix} \;0\; & \;0\; \\ 0 & 0 \end{matrix} & q
\end{array}
\!\right] \;.
$$
(On the first factor of each tensor $0,1\in M_4(\C)$ are the zero and the identity matrix.)

If $A\subset\mathrm{End}_{\C}(H_F)$ is a real $*$-subalgebra, we denote
by $A_{\C}$ the complex linear span of the elements in $A$; note that $A$ and $A_{\C}$ have
the same commutant in $\mathrm{End}_{\C}(H_F)$. The map $a\mapsto a^\circ=J_F\bar aJ_F$ (here $\bar a=(a^*)^t$)
gives two isomorphisms $A_F\to A_F^\circ$ and $(A_F)_{\C}\to (A_F^\circ)_{\C}$.

\begin{lemma}\label{lemma:1}
The commutant of the algebra of elements \eqref{eq:8t8} in $M_8(\C)$ is the algebra $C_F$ with elements\vspace{10pt}
\addtolength{\arraycolsep}{3pt}
\begin{equation}\label{eq:CF}
\left[\begin{array}{ccccc}
\multicolumn{1}{c|}{\rule{0pt}{15pt}q_{11}} &&& \multicolumn{1}{|c|}{\!\rule{0pt}{15pt}q_{12}} \\[5pt]
\cline{1-2}\cline{4-4}
& \multicolumn{1}{|c|}{\rule{0pt}{15pt}\,\alpha\,} \\[4pt]
\cline{2-3}
&& \multicolumn{1}{|c|}{\;\;\beta 1_2\rule[-15pt]{0pt}{37pt}\;} \\
\cline{1-1}\cline{3-4}
\multicolumn{1}{c|}{\rule{0pt}{15pt}q_{21}} &&& \multicolumn{1}{|c|}{\rule{0pt}{15pt}q_{22}} \\[5pt]
\cline{1-1}\cline{4-5}
\begin{matrix} ~ \\ ~ \\ ~ \end{matrix}
&&&& \multicolumn{1}{|c}{\quad\;\delta 1_3\quad\rule[-16pt]{0pt}{37pt}}
\end{array}\right] \;,
\end{equation}
where the $\beta$-block is $2\times 2$, the $\delta$-block is $3\times 3$, and all other framed blocks are $1\times 1$
($\alpha,\beta,\delta\in\C$, $q=(q_{ij})\in M_2(\C)$). All other blocks are zero (zeroes are omitted).

The commutant of $A_F$ in $\mathrm{End}_{\C}(H)$ is $A_F'=C_F\otimes M_4(\C)$.
\end{lemma}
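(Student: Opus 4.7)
My plan is to compute $C_F$ by decomposing $\C^8$ into isotypic components under the action of the complexified algebra $(A_F)_\C$, and then invoking the standard description of the commutant of a $*$-representation of a semisimple algebra on a finite-dimensional complex vector space.

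First, since $\Q\otimes_\R\C\simeq M_2(\C)$, and the real algebra $\C$ complexifies as $\C\oplus\C$ (the two summands separating the characters $\lambda\mapsto\lambda$ and $\lambda\mapsto\bar\lambda$), the complexification of $A_F$ is $\C\oplus\C\oplus M_2(\C)\oplus M_3(\C)$. Reading off the diagonal of \eqref{eq:8t8}, the standard basis $(e_i)_{i=1}^8$ of $\C^8$ splits into four isotypic pieces: $\mathrm{span}(e_1,e_5)$ is $\lambda$-isotypic with multiplicity $2$; $\mathrm{span}(e_2)$ is $\bar\lambda$-isotypic with multiplicity $1$; $\mathrm{span}(e_3,e_4)$ carries the fundamental representation of $M_2(\C)$ with multiplicity $1$; and $\mathrm{span}(e_6,e_7,e_8)$ carries the fundamental representation of $M_3(\C)$ with multiplicity $1$.

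By Schur's lemma applied to each isotypic component, the commutant in $M_8(\C)$ is isomorphic to $M_2(\C)\oplus\C\oplus\C\oplus\C$: the $M_2(\C)$ factor assembles the intertwiners $q=(q_{ij})$ between the two copies of the $\lambda$-character, while the three $\C$ factors act as scalars $\alpha$, $\beta$, $\delta$ on the remaining components. Re-expressing this decomposition in the standard basis of $\C^8$ reproduces exactly the matrix pattern \eqref{eq:CF}, with the off-diagonal entries $q_{12},q_{21}$ appearing at positions $(1,5)$ and $(5,1)$ precisely because $e_5$ sits in the lower $4\times 4$ block of \eqref{eq:8t8}.

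For the second assertion, observe that $A_F\subset M_8(\C)\otimes 1_4\subset M_8(\C)\otimes M_4(\C)=\mathrm{End}_\C(H_F)$. The standard tensor-product identity $(B\otimes 1_4)'=B'\otimes M_4(\C)$ for $B\subset M_8(\C)$ immediately yields $A_F'=C_F\otimes M_4(\C)$. The only point that is genuinely delicate, rather than bookkeeping, is to verify that $\Q_\C\simeq M_2(\C)$ acts irreducibly on $\mathrm{span}(e_3,e_4)$ (so that this is really one irreducible component and not a direct sum of two one-dimensional ones); once that is granted, the rest is a routine inspection of the isotypic decomposition.
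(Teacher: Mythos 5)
Your proof is correct, and it is genuinely more structured than the one in the paper, which simply declares the lemma to hold ``by direct computation.'' Your route --- pass to the complexification $(A_F)_\C\simeq\C\oplus\C\oplus M_2(\C)\oplus M_3(\C)$ (legitimate because a real subalgebra and its complex linear span have the same commutant, a fact the paper itself records in \S\ref{sec:thedata}), decompose $\C^8$ into isotypic components, and apply Schur's lemma --- not only verifies the answer but explains its shape: the $2\times2$ block $q$ in \eqref{eq:CF} linking positions $1$ and $5$ is exactly the multiplicity algebra of the character $\lambda$, which occurs twice, while $\alpha,\beta,\delta$ are the scalars forced by Schur on the multiplicity-one components. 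The brute-force computation the authors have in mind buys nothing beyond the statement itself, whereas your argument makes the second assertion, $A_F'=C_F\otimes M_4(\C)$, an instance of the standard identity $(B\otimes 1)'=B'\otimes M_4(\C)$ rather than a separate check. You also correctly isolate the one point needing verification, namely that $\Q_\C=M_2(\C)$ (so that $\mathrm{span}(e_3,e_4)$ is a single irreducible and contributes only a scalar $\beta 1_2$); this is immediate from the explicit $2\times2$ matrix form of quaternions given in the paper, whose complex span is all of $M_2(\C)$. In short: same statement, cleaner and more informative proof.
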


\begin{proof}
By direct computation.
\end{proof}

\noindent
Note that $A_F'\simeq M_4(\C)^{\oplus 3}\oplus M_8(\C)$.
The map $x\mapsto J_F\bar xJ_F$ is an isomorphism between $A_F'$ and $(A_F^\circ)'$.
From this, we get the following result.

\begin{lemma}\label{lemma:2}
The commutant $(A_F^\circ)'$ of $A_F^\circ$ has elements
\begin{equation}\label{eq:lemma2}
a\otimes e_{11}+
\begin{bmatrix}
\;b\; \\ & \;c\;
\end{bmatrix}\otimes e_{22}
+\begin{bmatrix}
\;b\; \\ & \;d\;
\end{bmatrix}\otimes (e_{33}+e_{44})
\end{equation}
with $a\in M_8(\C)$, $b,c,d\in M_4(\C)$.
\end{lemma}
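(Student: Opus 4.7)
The plan is to compute the commutant directly, using that $(A_F^\circ)' = ((A_F^\circ)_{\C})'$. Inspection of the explicit formula for $a^\circ$ in \S\ref{sec:thedata} shows that $(A_F^\circ)_{\C}$ is generated, as a complex $*$-algebra, by the four families
\[
1\otimes e_{11},\qquad p_2\otimes e_{22},\qquad p_1\otimes Y \ \ (Y\in\mathrm{Span}\{e_{ij}\}_{i,j=2}^{4}),\qquad p_2\otimes Z \ \ (Z\in\mathrm{Span}\{e_{ij}\}_{i,j=3}^{4}),
\]
where $p_1,p_2\in M_8(\C)$ are the projections onto the top and bottom halves of $\C^8$; the first two generators come from the real and imaginary parts of $\lambda\in\C$, whose $\R$-linear reality constraint disappears upon complexification.

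Writing a general operator as $T=\sum_{k,l=1}^{4}T^{kl}\otimes e_{kl}$ with $T^{kl}\in M_8(\C)$, I would impose each family of commutation relations in turn. The condition $[T,\,1\otimes e_{11}]=0$ immediately yields $T^{1l}=T^{k1}=0$ for $k,l\ne 1$. The condition $[T,\,p_1\otimes Y]=0$ for $Y=e_{\alpha\beta}$ with $\alpha,\beta\in\{2,3,4\}$ gives, after expanding and matching coefficients,
\[
T^{ab}p_1 = p_1 T^{ab}=\delta_{ab}\,W\qquad\forall\,a,b\in\{2,3,4\},
\]
for a single $W\in p_1M_8(\C)p_1$ independent of the diagonal index. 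Analogously, $[T,\,p_2\otimes Z]=0$ produces $T^{ab}p_2 = p_2 T^{ab}=\delta_{ab}\,W'$ for $a,b\in\{3,4\}$ with a common $W'\in p_2M_8(\C)p_2$, together with $T^{al}p_2=p_2T^{la}=0$ for $a\in\{1,2\}$, $l\in\{3,4\}$. Finally $[T,\,p_2\otimes e_{22}]=0$ gives the remaining conditions $T^{k2}p_2=0$ and $p_2 T^{2l}=0$ for $k,l\ne 2$, and $[T^{22},p_2]=0$.

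Combining these, every off-diagonal $T^{kl}$ with $k\ne l$ vanishes, because its $p_1$- and $p_2$-pieces are killed by one of the conditions above. For the diagonal, $T^{11}=:a\in M_8(\C)$ is free; $T^{22}$ is block-diagonal with $p_1$-block equal to $W$ and $p_2$-block a free $c\in M_4(\C)$; and $T^{33}=T^{44}$ are both block-diagonal with blocks $W$ and $W'$. Renaming $W=b$ and $W'=d$ recovers the form \eqref{eq:lemma2}. I expect the main subtlety to be the synchronization forcing the $p_1$-blocks of $T^{22},T^{33},T^{44}$ to a common value $W$ (and the $p_2$-blocks of $T^{33},T^{44}$ to a common $W'$): this is produced by the off-diagonal matrix units $e_{\alpha\beta}$ with $\alpha\ne\beta$ in the $M_3$- (resp.\ $M_2$-)corner, which couple the different diagonal indices; the remaining bookkeeping is routine.
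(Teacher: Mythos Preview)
Your direct computation is correct: the four generating families you list really span $(A_F^\circ)_{\C}$, and the bookkeeping with the matrix units forces exactly the form \eqref{eq:lemma2}.

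The paper, however, does not compute $(A_F^\circ)'$ from scratch. It observes that conjugation by $J_F$ (composed with complex conjugation), $x\mapsto J_F\bar xJ_F$, is an algebra isomorphism carrying $A_F$ onto $A_F^\circ$, hence also $A_F'$ onto $(A_F^\circ)'$. Since $A_F'=C_F\otimes M_4(\C)$ was already determined in Lemma~\ref{lemma:1}, one simply applies this isomorphism and reads off \eqref{eq:lemma2}. This is a one-line transfer argument that avoids repeating the commutant calculation; your approach is self-contained but duplicates work that is encoded in the $J_F$-symmetry. The payoff of the paper's route is economy and a transparent reason \emph{why} the two commutants look like ``swapped'' versions of each other; the payoff of yours is that it does not rely on having Lemma~\ref{lemma:1} in hand or on understanding how $J_F$ acts on $M_8(\C)\otimes M_4(\C)$.
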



\subsection{The data $(B_F,\bar H_0,H_1)$}\label{sec:Hzero}
In this section we explain how to get the same gauge group from
a spectral triple based on the complex algebra \eqref{eq:BF} with a degenerate representation.

Let us put particles into a row vector and a $3\times 4$ matrix as follows
$$
\begin{bmatrix}
e_R  & d^1_R & d^2_R & d^3_R 
\end{bmatrix}
\;,\qquad
\begin{bmatrix}
\nu_R & u^1_R & u^2_R & u^3_R \\
\nu_L & u^1_L & u^2_L & u^3_L \\
e_L   & d^1_L & d^2_L & d^3_L
\end{bmatrix} \;,
$$
Thus a particle is represented by a vector in
$\C^3\oplus M_{3\times 4}(\C)$, with inner product given on each summand by $\inner{v,w}=\tr(v^*w)$.
Antiparticles belong to the dual space.

The Hilbert space is then $H=H_0\oplus \bar H_0\oplus H_1$, where elements of
$H_0\simeq\C^4$ are row vectors, elements of
$\bar H_0\simeq\C^4$ are column vectors, and $H_1$ has elements
$$
\sbinom{a}{b} \;,\qquad a\in M_{3\times 4}(\C),b \in M_{4\times 3}(\C).
$$
The real structure $J$ is the antilinear operator:
\begin{equation}\label{eq:JBF}
J\Big(v\oplus w\oplus\sbinom{a}{b}\Big)=
w^*\oplus v^*\oplus\sbinom{b^*}{a^*} \;.
\end{equation}
We define two unital $*$-representations
$\bar\pi_0:B_F\to\mathrm{End}_{\C}(\bar H_0)$
and $\pi_1:B_F\to\mathrm{End}_{\C}(H_1)$
of the algebra $B_F=\C\oplus M_2(\C)\oplus M_3(\C)$ in \eqref{eq:BF}
as follows
$$
\bar\pi_0(\lambda,q,m)=\left[\!
\begin{array}{c|ccc}
\lambda & \;0\; & \;0\; & \;0\; \\
\hline
\begin{matrix} \;0\; \\ 0 \\ 0 \end{matrix} && m
\end{array}
\!\right]
,\quad
\pi_1(\lambda,q,m)=\begin{bmatrix}
\left[\!
\begin{array}{c|c}
\;\lambda\; &
\begin{matrix} \;0\; & \;0\; \end{matrix} \\
\hline
\begin{matrix} \;0\; \\ 0 \end{matrix} & q
\end{array}
\!\right] \!\!\! \\ & \!\!\!
\left[\!
\begin{array}{c|ccc}
\lambda & \;0\; & \;0\; & \;0\; \\
\hline
\begin{matrix} \;0\; \\ 0 \\ 0 \end{matrix} && m
\end{array}
\!\right]
\end{bmatrix},
$$
both acting via row-by-column multiplication from the left.
Here $\lambda\in\C$, $q\in M_2(\C)$ and $m\in M_3(\C)$ are $2\times 2$ and $3\times 3$ blocks,
and the off-diagonal $3\times 4$ and $4\times 3$ blocks are zero.

An (injective) representation $\rho$ of $\U(A)=\U(1)\times\U(2)\times\U(3)$ is given by \eqref{eq:reprho}.

One can check that $\rho$, composed with the map
$$
\widetilde{G}_{SM}=\U(1)\times\SU(2)\times\SU(3)\xrightarrow{\varphi} \U(A) \;,\qquad
(\lambda,q,m)\mapsto (\lambda^6,\lambda^3q,\lambda^2m) \;,
$$
gives the correct representation of $\widetilde{G}_{SM}$ (in particular,
each particle has the correct weak hypercharge). The kernel $\varphi$ is given again
by the elements in \eqref{eq:Z6}, so that the range of $\varphi$ is $G_{SM}\simeq\widetilde G_{SM}/\Z_6$.
The map
$$
\U(A)\supset G_{SM}\ni (\lambda,q,m)\mapsto \begin{bmatrix}
q \\ & \bar m
\end{bmatrix} \in\mathrm{S}(\U(2)\times\U(3))
$$
is an isomorphism. We then recover $G_{SM}$ as the subgroup of $\U(A)$ satisfying
the unimodularity condition
$$
\det\bar\pi_0(u)=\det\pi_1(u)=1 \;.
$$
The relation with $A_F$ is as follows.
Let $\{a_i\}_{i=1}^4$ be the rows of $a\in M_4(\C)$ and $\{b_j\}_{j=1}^4$ the columns of $b\in M_4(\C)$.
With the isometry
$$
H\ni a_2\oplus b_2\oplus
\left[\begin{array}{ccc}\multicolumn{3}{c}{a_1} \\ \multicolumn{3}{c}{a_3} \\ \multicolumn{3}{c}{a_4} \\
\hline
 b_1 & b_3 & b_4 \end{array}\right]\longrightarrow
\sbinom{a}{b}
\in M_{8\times 4}(\C)
$$
we transform $J$ in \eqref{eq:JBF} into the real structure $J_F$ in \eqref{eq:JF},
and $\pi$ into the representation (denoted by the same symbols):
\begin{equation}\label{eq:7t7}
\pi(\lambda,q,m)=\begin{bmatrix}
\left[\!
\begin{array}{c|c}
\begin{matrix} \;\lambda\; & \;0\;  \\ 0 & 0 \end{matrix} &
\begin{matrix} \;0\; & \;0\;  \\ 0 & 0 \end{matrix} \\
\hline
\begin{matrix} \;0\; & \;0\; \\ 0 & 0 \end{matrix} & q
\end{array}
\!\right] \!\! \\ & \!\!
\left[\!
\begin{array}{c|ccc}
\lambda & \;0\; & \;0\; & \;0\; \\
\hline
\begin{matrix} \;0\; \\ 0 \\ 0 \end{matrix} && m
\end{array}
\!\right]
\end{bmatrix}\otimes 1 \;,
\end{equation}
where $\lambda\in\C$, $q\in M_2(\C)$ and $m\in M_3(\C)$.

Note that the only difference between the matrix in \eqref{eq:7t7} and the one in \eqref{eq:8t8}
is the zero in position $(2,2)$ replacing $\bar\lambda$.
More precisely, the algebra $(A_F)_{\C}$ is the minimal unitalization of $\pi(B_F)$
in $\mathrm{End}_{\C}(H_F)$, and $(A_F^\circ)_{\C}$ is the unitalization of $A^\circ:=J_F\pi(B_F)J_F$.

Adding the identity doesn't change the commutant, nor $\Omega^1$. Thus, the results in the
next section which we state for the algebra $A_F$ are valid for $B_F$ as well.


\section{The 1st order condition}\label{sec:4}

In this section, we describe the most general Dirac operator satisfying the 1st order condition, which is the crucial one for a study of the property (M).
To keep things general, at the beginning we make no assumption regarding the other axioms (parity, KO-dimension, etc.).
We will then impose the additional requirement $J_FD_F=D_FJ_F$, with the plus sign on the right hand side dictated by the physical
content of the theory (the mass terms in the spectral action come from elements commuting with $J_F$).
It turns out that for any $D_F$ satisfying the 1st order condition, there is one commuting with $J_F$ which gives the same Clifford algebra (so, the condition $J_FD_F=D_FJ_F$ does not create any particular problem).
In the next sections, we will discuss the issue of the grading and the property (M), with or without grading.

The next proposition was originally stated in \cite[\S3.4]{Kra97}, and proved by
decomposing the $A$-bimodule $H$ into
irreducible ones and determining the corresponding matrix elements of $D$. Here, without assuming the orientability condition, we present an alternative proof that doesn't make use of such a decomposition.

\begin{prop}\label{prop:1}
Let $H$ be an $A\otimes A^\circ$-bimodule (i.e.~$[a,b^\circ]=0\;\forall\;a\in A$ and $b^\circ\in A^\circ$).
Then $D\in\mathrm{End}_{\C}(H)$ satisfies the 1st order condition --- i.e.~$[[D,a],b^\circ]=0\;\forall\;a\in A,b^\circ\in A^\circ$ --- if and only if it is of the form
$$
D=D_0+D_1
$$
where $D_0\in (A^\circ)'$ and $D_1\in A'$.
\end{prop}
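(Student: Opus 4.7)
The implication $\Leftarrow$ is immediate from the bimodule hypothesis $[A,A^\circ]=0$. If $D_0\in (A^\circ)'$ and $a\in A$, then both $D_0$ and $a$ commute with every $b^\circ\in A^\circ$, hence so does $[D_0,a]$, giving $[[D_0,a],b^\circ]=0$. If $D_1\in A'$, then $[D_1,a]=0$ already, so the condition is trivial. Summing handles $D=D_0+D_1$.

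For the nontrivial direction, my plan is to reinterpret the 1st order condition as the vanishing of the obstruction to innerness of a certain derivation, and then invoke semisimplicity of $A$. Concretely, the map $\delta:A\to\mathrm{End}_{\C}(H)$, $\delta(a):=[D,a]$, is a derivation (the Leibniz rule $\delta(ab)=\delta(a)b+a\delta(b)$ is the usual commutator identity). The 1st order condition states that $\delta(a)\in (A^\circ)'$ for every $a\in A$. Since the hypothesis $[A,A^\circ]=0$ gives $A\subset (A^\circ)'$, the algebra $(A^\circ)'$ is canonically an $A$-bimodule (by left and right multiplication within $\mathrm{End}_{\C}(H)$), and $\delta$ factors through it as a derivation $A\to (A^\circ)'$.

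Now $A$ is a finite-dimensional $C^*$-algebra, real or complex, hence semisimple, and separable over its base field. By the associative-algebra analogue of Whitehead's lemma ($HH^1(A,M)=0$ for any $A$-bimodule $M$ when $A$ is finite-dimensional separable), $\delta$ must be inner: there exists $D_0\in (A^\circ)'$ with $[D,a]=[D_0,a]$ for every $a\in A$. Setting $D_1:=D-D_0$, the equality $[D_1,a]=0$ holds for all $a\in A$, so $D_1\in A'$, and $D=D_0+D_1$ is the claimed decomposition (manifestly not unique, since any element of $A'\cap (A^\circ)'$ can be shifted between the two summands).

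The only step requiring care is the invocation of Whitehead's lemma in the real-algebra case relevant for $A_F=\C\oplus\Q\oplus M_3(\C)$, but separability of each simple summand over $\R$ is classical. Alternatively, since Lemma \ref{lemma:1} (and the remark preceding it) guarantees that $A_F$ and its complexification $(A_F)_{\C}$ have the same commutant in $\mathrm{End}_{\C}(H_F)$, one may first complexify and reduce everything to the split semisimple complex case, where the vanishing of $HH^1$ is entirely standard. I expect this to be the only conceptual subtlety; the rest of the argument is formal and, crucially, never requires decomposing $H$ into irreducible $A$-$A^\circ$ subbimodules, as is desired in the paper.
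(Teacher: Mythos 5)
Your argument is correct, but it takes a genuinely different route from the paper's. The paper proves the nontrivial direction by pure linear algebra: Lemma~\ref{lemma:2b} supplies a complement $W$ of $V=A'$ in $\mathrm{End}_{\C}(H)$ with $[V,W]\subseteq W$ (the Hilbert--Schmidt orthogonal complement), one writes $D=D_0+D_1$ with $D_0\in A'$ and $D_1\in W$, and the Jacobi identity together with $A^\circ\subseteq A'$ forces $[D_1,b^\circ]\in A'\cap W=\{0\}$, i.e.\ $D_1\in(A^\circ)'$. You instead read the first-order condition as saying that the derivation $a\mapsto[D,a]$ takes values in the $A$-bimodule $(A^\circ)'$ and invoke $HH^1(A,M)=0$ for finite-dimensional separable algebras to produce the implementing element $D_0\in(A^\circ)'$ directly; the remainder $D-D_0\in A'$ then comes for free. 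Both routes deliberately avoid decomposing $H$ into irreducible bimodules as in \cite{Kra97}, which is exactly what the paper wants. Your version is more structural: it names the cohomological mechanism behind the statement, your handling of the real case (separability over $\R$, or complexification using that $\pi(A)$ and $\pi(A)_{\C}$ have the same commutant) is sound, and the argument generalizes verbatim to any algebra with vanishing first Hochschild cohomology. The paper's version is elementary and self-contained; the two are in fact close relatives, since the standard proof of Whitehead's lemma for $*$-algebras is itself an averaging/orthogonal-projection argument of the same flavour as Lemma~\ref{lemma:2b}. The one point you should add a sentence about is the degenerate case relevant to $B_F$ (\S\ref{sec:Hzero}): there $\pi(1)\neq 1_H$, so the bimodule $(A^\circ)'$ is not unital and $\delta(1)=[D,\pi(1)]$ need not vanish, which puts you outside the usual unital form of Whitehead's lemma. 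The argument still goes through after splitting $(A^\circ)'$ into the four corners cut out by $p=\pi(1)$ and checking by hand that the off-diagonal and $(1-p)(\cdot)(1-p)$ components of the derivation are inner (e.g.\ $p\,\delta(a)\,(1-p)=-a\,pD(1-p)$), whereas the paper's complement argument needs no such adjustment.
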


We need a preliminary Lemma.

\begin{lemma}\label{lemma:2b}
Let $H$ be finite-dimensional and $V$ any $*$-subalgebra of $\mathrm{End}(H)$.
Then, there exists a direct complement $W$ of $V$
in $\mathrm{End}(H)$
satisfying $[V,W]\subseteq W$.
\end{lemma}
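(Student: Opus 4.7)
The natural approach is to produce $W$ as an orthogonal complement with respect to a well-chosen inner product on $\mathrm{End}(H)$. Since $H$ is finite-dimensional, the Hilbert–Schmidt (Frobenius) inner product $\inner{A,B}:=\tr(A^*B)$ makes $\mathrm{End}(H)$ into a finite-dimensional Hilbert space, so every subspace has an orthogonal complement that is automatically a vector-space direct summand. My proposal is to take
\[
W:=V^\perp=\big\{w\in\mathrm{End}(H)\,:\,\tr(v^*w)=0\text{ for all }v\in V\big\}.
\]
Then by standard linear algebra $\mathrm{End}(H)=V\oplus W$.

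The content of the lemma is the stability property $[V,W]\subseteq W$. To verify this I would fix $v\in V$, $w\in W$, and an arbitrary test element $v'\in V$, and compute
\[
\inner{v',[v,w]}=\tr\bigl((v')^*vw\bigr)-\tr\bigl((v')^*wv\bigr).
\]
Applying the cyclicity of the trace to the second term rewrites it as $\tr\bigl(v(v')^*w\bigr)$, so the sum collapses to
\[
\inner{v',[v,w]}=\tr\bigl(\bigl[(v')^*,v\bigr]\,w\bigr).
\]
Here is the point where the two hypotheses on $V$ are used: because $V$ is a $*$-subalgebra, $(v')^*\in V$, and because $V$ is a subalgebra, the commutator $[(v')^*,v]$ lies in $V$. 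Setting $v'':=\bigl[(v')^*,v\bigr]^*\in V$, the right-hand side equals $\inner{v'',w}$, which vanishes since $w\in V^\perp$. Since $v'\in V$ was arbitrary, $[v,w]\in V^\perp=W$.

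The argument is self-contained and does not rely on any structural decomposition of $V$; the only non-trivial ingredient is the $*$-closure, which is precisely what guarantees that $(v')^*$ belongs to $V$ when $v'$ does. I do not expect a genuine obstacle here — the step to watch is simply making sure that cyclicity is applied to the correct factor so that the remaining commutator sits inside $V$ rather than mixing $V$ and its adjoint.
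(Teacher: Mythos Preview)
Your proposal is correct and follows essentially the same approach as the paper: take $W=V^\perp$ for the Hilbert--Schmidt inner product and use cyclicity of the trace together with $*$-closure of $V$ to show $\inner{v',[v,w]}=\inner{[v^*,v']\,,\,w}=0$. The paper's argument is the same computation written slightly more tersely.
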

\begin{proof}
Modulo an isomorphism, we can assume $H=\C^n$ for some $n$, and $\mathrm{End}(H)=M_n(\C)$.
Let $W=V^\perp$ be the orthogonal complement with respect to the Hilbert-Schmidt inner product: $\inner{v,w}_{\mathrm{HS}}:=\tr(v^*w)\;\forall\;v,w\in M_n(\C)$.
For all $a,b\in V$ and $c\in V^\perp$, using the cyclic property of the trace, we derive:
$$
\inner{a,[b,c]}_{\mathrm{HS}}=
\inner{[b^*,a],c}_{\mathrm{HS}}=0 \;,
$$
where in last step we noticed that $[b^*,a]\in V$,
since $V$ is a $*$-algebra, and then the inner product is zero.
Thus $[b,c]\in V^\perp$, and $[V,V^\perp]\subset V^\perp$.
\end{proof}

\begin{proof}[Proof of Prop.~\ref{prop:1}]
The ``if'' part is trivial; we now prove the ``only if''. We want to prove that the 1st order condition implies $D\in A'+(A^\circ)'$, where by $A'+(A^\circ)'$ we mean the vector space generated by the commutants $A'$ and $(A^\circ)'$ (not the algebra).

We apply Lemma \ref{lemma:2b} to $V=A'$ and decompose $D=D_0+D_1$ with $D_0\in A'$ and $D_1\in W$.
From the 1st order condition:
$$
[a,[D_1,b^\circ]]=
[[D_1,a],b^\circ]-[D_1,[a,b^\circ]]=
[[D,a],b^\circ]+0=0 \;,
$$
for all $a\in A$, $b^\circ\in A^\circ$.
Hence
$[D_1,b^\circ]\in A'$
for all $b^\circ\in A^\circ$. But $A^\circ\subseteq A'$ 
and from Lemma \ref{lemma:2b} we also have $[D_1,b^\circ]\in W$. Since the sum $\mathrm{End}(H)=A'\oplus W$ is direct, it must be $[D_1,b^\circ]=0$.
This means that $D_1\in (A^\circ)'$.
\end{proof}

Note that, contrary to \cite{Kra97}, here the decomposition in Prop.~\ref{prop:1} is not necessarily unique. Uniqueness of the decomposition in \cite{Kra97} follows from the orientability condition. However, we'll see that in the Standard Model example, the spectral triple is not orientable (cf.~\S\ref{sec:orient}) and $A_F'\cap (A_F^\circ)'$ is not zero.

We now come back to the Standard Model.
In the rest of the paper,
we employ  $A_F,A_F^\circ,H_F, J_F$ as defined 
in \S\ref{sec:thedata},
but the same results are valid for the algebra $B_F$ and the representation discussed in \S\ref{sec:Hzero}.

\smallskip

\begin{prop}\label{prop:2}
An operator $D_F=D_F^*$ as in Prop.~\ref{prop:1} commutes with $J_F$ if and only if it is of the form:
$$
D_F=D_0+J_FD_0J_F
$$
with $D_0=D_0^*\in (A_F^\circ)'$.
\end{prop}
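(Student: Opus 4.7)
The plan is to treat the two implications separately, using the interplay between conjugation by $J_F$ and the swap $A_F \leftrightarrow A_F^\circ$.

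For the easy ``if'' direction, I would suppose $D_0 = D_0^* \in (A_F^\circ)'$ and set $D_F := D_0 + J_F D_0 J_F$. Since $J_F A_F^\circ J_F = A_F$, conjugating commutants gives $J_F (A_F^\circ)' J_F = A_F'$, so $J_F D_0 J_F \in A_F'$ and $D_F$ already has the form required by Proposition~\ref{prop:1}. Using $J_F^2 = 1$ (visible from \eqref{eq:JF}), one has $J_F D_F J_F = J_F D_0 J_F + D_0 = D_F$, so $D_F$ commutes with $J_F$. Self-adjointness follows from the identity $(J_F X J_F)^* = J_F X^* J_F$ for a linear $X$, which is a standard consequence of $J_F$ being an antilinear isometry with $J_F^{-1}=J_F$.

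For the converse, I would start with a self-adjoint $D_F$ satisfying Proposition~\ref{prop:1} and commuting with $J_F$, and write $D_F = D_0' + D_1'$ with $D_0' \in (A_F^\circ)'$ and $D_1' \in A_F'$. Since both commutants are $*$-subalgebras of $\mathrm{End}_{\C}(H_F)$, replacing each summand by its self-adjoint part (which leaves the sum equal to $D_F=D_F^*$) lets me assume that $D_0'$ and $D_1'$ are themselves self-adjoint — I make no uniqueness claim, only that a self-adjoint decomposition exists. The $J_F$-invariance $J_F D_F J_F = D_F$ then yields a second such decomposition
$$
D_F = J_F D_1' J_F + J_F D_0' J_F,
$$
with the first summand in $(A_F^\circ)'$ and the second in $A_F'$. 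Averaging the two, I would define
$$
D_0 := \tfrac{1}{2}\bigl(D_0' + J_F D_1' J_F\bigr)\in (A_F^\circ)',
$$
which is self-adjoint, and a direct calculation using $J_F^2=1$ gives $D_0 + J_F D_0 J_F = \tfrac{1}{2}(D_F + J_F D_F J_F)=D_F$, as required.

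The only mildly delicate point is keeping careful track of adjoints through the antilinear $J_F$; once the formula $(J_F X J_F)^* = J_F X^* J_F$ is in hand, the symmetrization arguments in both directions proceed mechanically, and the non-uniqueness of the splitting in Proposition~\ref{prop:1} (noted in the remark after its proof) causes no trouble because we only need the \emph{existence} of a distinguished $J_F$-symmetric representative.
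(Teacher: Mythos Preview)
Your argument is correct. The averaging step $D_0:=\tfrac12(D_0'+J_FD_1'J_F)$ works exactly as you say, and your preliminary symmetrization to make $D_0',D_1'$ self-adjoint is legitimate because both commutants are $*$-closed and the decomposition in Proposition~\ref{prop:1} is not unique.

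The paper reaches the same conclusion by a slightly different route. Starting from $D_F=D_0+D_1$, it rewrites $J_FD_FJ_F-D_F=0$ as $(J_FD_0J_F-D_1)+(J_FD_1J_F-D_0)=0$ and observes that the two bracketed terms lie in $A_F'$ and $(A_F^\circ)'$ respectively, forcing each into the intersection; this produces a remainder $D':=D_1-J_FD_0J_F\in A_F'\cap(A_F^\circ)'$ which is then shown to commute with $J_F$ and absorbed as $D_0\leadsto D_0+\tfrac12 D'$. Self-adjointness is handled in a separate final step, by writing $D_0=S+iT$ and showing the $T$-part cancels. Your approach is more economical: you impose self-adjointness first (by a harmless symmetrization) and then symmetrize over $J_F$ in one stroke, never needing to isolate or analyze the intersection $A_F'\cap(A_F^\circ)'$. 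The paper's detour through $D'$ has the minor advantage of making explicit where the non-uniqueness lives, which feeds into the refined decomposition of Proposition~\ref{prop:9}; your version trades that for brevity.
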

\begin{proof}
$x\mapsto J_F\bar xJ_F$ gives a bijection $A_F'\to (A_F^\circ)'$. The condition $J_FD_FJ_F=D_F$
gives
$$
(J_FD_0J_F-D_1)+(J_FD_1J_F-D_0)=0 \;.
$$
Since the first term is in $A_F'$ and the second in $(A_F^\circ)'$, the sum is zero if and only if
both $J_FD_0J_F-D_1$ and $J_FD_1J_F-D_0$ belong to $A_F'\cap (A_F^\circ)'$.
Called $D'=D_1-J_FD_0J_F$, one has the decomposition
$$
D_F=D_0+J_FD_0J_F+D' \;.
$$
From $J_FD_FJ_F-D_F=J_FD'J_F-D'$ one deduces that $J_F$ and $D'$ must commute.
So $D_F=(D_0+D'/2)+J_F(D_0+D'/2)J_F$ and we get the decomposition \eqref{eq:decom},
after renaming $D_0+D'/2\to D_0$.

Decompose $D_0=S+iT$ with $S$ and $T$ selfadjoint. Since $J_F$ is antilinear and $J_F=J_F^*$:
$$
D_F-D_F^*=2i(T-J_FTJ_F)
$$
which must be zero. But this implies
$$
D_F=S+J_FSJ_F+i(T-J_FTJ_F)=S+J_FSJ_F \;.
$$
Renaming $S\leadsto D_0$ (which now is selfadjoint) we conclude the proof.
\end{proof}

\begin{rem}\label{rem:12}
Note that the $D_1$ term does not contribute to $\Cl(A_F)$ (it commutes with $A_F$).
Then, for any Dirac operator as in Prop.~\ref{prop:1}, we can find one commuting with $J_F$
(replacing $D_1$ by $J_FD_0J_F$) without changing the Clifford algebra $\Cl(A_F)$.
In particular, the property (M) puts constrains only on $D_0$.
\end{rem}

It is useful to reformulate Prop.~\ref{prop:1} and Prop.~\ref{prop:2} as follows. Let
\begin{equation}\label{eq:UpsilonR}
D_R:=(\Upsilon_Re_{51}+\bar\Upsilon_Re_{15})\otimes e_{11} \;,
\end{equation}
with $\Upsilon_R\in\C$. Note that $D_R\in A_F'\cap (A_F^\circ)'$ and $J_FD_R=D_RJ_F$.

\begin{prop}\label{prop:9}
The most general $D_F=D_F^*$ satisfying the 1st order condition is 
\begin{equation}\label{eq:decom}
D_F=D_0+D_1+D_R
\end{equation}
where $D_0=D_0^*\in (A_F^\circ)'$ and $D_1=D_1^*\in A_F'$ have null entry in direction
of $e_{15}\otimes e_{11}$ and $e_{51}\otimes e_{11}$, and $D_1=J_FD_0J_F$ if $D_F$ and $J_F$ commute.
\end{prop}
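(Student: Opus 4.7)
The plan is to combine Propositions~\ref{prop:1} and~\ref{prop:2} with the observation that the two elementary operators $e_{51}\otimes e_{11}$ and $e_{15}\otimes e_{11}$ lie in the intersection $A_F'\cap(A_F^\circ)'$; isolating these directions in a separate term $D_R$ will normalize the non-unique decomposition provided by Prop.~\ref{prop:1}.

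First I would verify from Lemma~\ref{lemma:1} and Lemma~\ref{lemma:2} that $e_{15}\otimes e_{11}$ and $e_{51}\otimes e_{11}$ indeed belong to $A_F'\cap(A_F^\circ)'$: $e_{15}$ and $e_{51}$ populate the off-diagonal $q_{12},q_{21}$ slots of the matrix \eqref{eq:CF}, so both operators lie in $C_F\otimes M_4(\C)=A_F'$; and taking the summand $a\otimes e_{11}$ of \eqref{eq:lemma2} with $a=e_{15}$ or $e_{51}$ shows they also lie in $(A_F^\circ)'$. A short block computation gives $J_F(e_{51}\otimes e_{11})J_F=e_{15}\otimes e_{11}$, and using the antilinearity of $J_F$ on the scalar coefficients one concludes $D_R\in A_F'\cap(A_F^\circ)'$ with $J_FD_RJ_F=D_R$.

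Next, apply Prop.~\ref{prop:1} to write $D_F=D_0'+D_1'$ with $D_0'\in(A_F^\circ)'$ and $D_1'\in A_F'$. Since both commutants are closed under adjunction and $D_F=D_F^*$, averaging with the Hermitian conjugate lets me assume that $D_0'$ and $D_1'$ are themselves self-adjoint. Because the two targeted operators lie in $(A_F^\circ)'$, I can subtract from $D_0'$ a self-adjoint combination $(\alpha_0 e_{51}+\bar\alpha_0 e_{15})\otimes e_{11}$ without leaving $(A_F^\circ)'$ and simultaneously kill its entries in those two directions (self-adjointness of $D_0'$ forces the two coefficients to be complex conjugates, so a single $\alpha_0\in\C$ does the job); applying the same trick with a coefficient $\alpha_1$ to $D_1'$ yields the decomposition \eqref{eq:decom} with $\Upsilon_R:=\alpha_0+\alpha_1$ and $D_0,D_1$ satisfying the stated null-entry condition.

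Finally, in the case $D_FJ_F=J_FD_F$, Prop.~\ref{prop:2} already provides $D_F=D_0'+J_FD_0'J_F$ with $D_0'\in(A_F^\circ)'$ self-adjoint. Writing $D_0'=D_0+(\alpha e_{51}+\bar\alpha e_{15})\otimes e_{11}$ as above and using $J_F(e_{51}\otimes e_{11})J_F=e_{15}\otimes e_{11}$ together with the antilinearity of $J_F$, the two shifts coming from $D_0'$ and $J_FD_0'J_F$ add coherently, producing $D_F=D_0+J_FD_0J_F+D_R$ with $\Upsilon_R=2\alpha$; since $J_F$ merely permutes the two targeted directions, $D_1:=J_FD_0J_F$ inherits the null-entry property. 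The only genuine obstacle in the whole argument is the bookkeeping---verifying the intersection claim and checking that the renormalized $D_0,D_1$ remain in their commutants, self-adjoint, with the prescribed vanishing entries---which is straightforward once the explicit form of the commutants from Lemmas~\ref{lemma:1}--\ref{lemma:2} is unpacked.
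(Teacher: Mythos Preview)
Your proposal is correct and is exactly the reformulation the paper intends: the paper presents Prop.~\ref{prop:9} without a separate proof, simply noting that $D_R\in A_F'\cap(A_F^\circ)'$ with $J_FD_R=D_RJ_F$ and calling the result a restatement of Propositions~\ref{prop:1} and~\ref{prop:2}. Your write-up fills in precisely those details (the intersection check via Lemmas~\ref{lemma:1}--\ref{lemma:2}, the self-adjoint averaging, and the subtraction of the $e_{15}\otimes e_{11}$, $e_{51}\otimes e_{11}$ components), and the one implicit step---that $D_1=J_FD_0J_F$ inherits the null-entry property because conjugation by $J_F$ is an anti-isometry for the Hilbert--Schmidt inner product and swaps the two targeted basis elements---is valid.
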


In this way we isolated all the terms which do not contribute to $\Omega^1$. For any $a\in A_F$
and $D_F$ as in \eqref{eq:decom}, $[D_F,a]=[D_0,a]$.

\section{The grading operator}\label{sec:5}

\begin{lemma}\label{lemma:8}
Let $\gamma_F$ be a grading operator. Any odd Dirac operator satisfying
the 1st order condition can be written in the form $D_F=D_0+D_1+\kappa D_R$ as in Prop.~\ref{prop:9}, with both $D_0$ and
$D_1$ odd operators and $\kappa=0$ or $1$ depending on the parity of $D_R$.
\end{lemma}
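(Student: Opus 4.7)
The plan is to apply Proposition \ref{prop:9} to $D_F$ and then take the odd part of each piece with respect to $\gamma_F$.

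I would first establish the compatibility of $\gamma_F$ with the two relevant commutants. Since the representation is even one has $[\gamma_F,A_F]=0$, and the relation $J_F\gamma_F=\pm\gamma_F J_F$ combined with $[\gamma_F,A_F]=0$ then yields $[\gamma_F,A_F^\circ]=0$ as well; hence $\gamma_F\in A_F'\cap(A_F^\circ)'$. For any $X\in\mathrm{End}_{\C}(H_F)$ the even/odd parts $X^\pm:=\tfrac{1}{2}(X\pm\gamma_F X\gamma_F)$ are well defined because $\gamma_F^2=\id$, and since conjugation by $\gamma_F$ preserves both $A_F'$ and $(A_F^\circ)'$, one has $X^\pm\in A_F'$ whenever $X\in A_F'$, and similarly for $(A_F^\circ)'$.

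Now, write $D_F=\tilde D_0+\tilde D_1+\tilde D_R$ as in Proposition \ref{prop:9}. Oddness of $D_F$ together with linearity of the odd-part projection gives
\[
D_F = D_F^- = \tilde D_0^- + \tilde D_1^- + \tilde D_R^-,
\]
so setting $D_0:=\tilde D_0^-\in(A_F^\circ)'$ and $D_1:=\tilde D_1^-\in A_F'$ yields two odd operators of the required type. It remains to identify $\tilde D_R^-$: using Lemmas \ref{lemma:1}--\ref{lemma:2}, any $\gamma_F\in A_F'\cap(A_F^\circ)'$ acts diagonally, with a common eigenvalue, on the two matrix units $e_{51}\otimes e_{11}$ and $e_{15}\otimes e_{11}$ that support $\tilde D_R$. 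Hence $\gamma_F\tilde D_R\gamma_F=\pm\tilde D_R$, so $\tilde D_R^-=\kappa\tilde D_R$ with $\kappa=1$ in the ``$-$'' case and $\kappa=0$ in the ``$+$'' case (where oddness of $D_F$ already forces $\Upsilon_R=0$, so $\tilde D_R=0$). The same diagonal action on these matrix units preserves the null-entry constraints on $\tilde D_0,\tilde D_1$ after passing to odd parts, so $D_0,D_1$ still conform to Proposition \ref{prop:9}.

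The main obstacle is the diagonal-action claim for $\gamma_F$ on the positions $(1,5),(5,1)$ of the first tensor factor. This is a short block-wise verification using the explicit descriptions in Lemmas \ref{lemma:1}--\ref{lemma:2}: the intersection $A_F'\cap(A_F^\circ)'$ is sufficiently rigid in the $2\times 2$ subblock at those positions that any involutive element there is diagonal in the standard basis, ensuring that $D_R$ has a well-defined parity with respect to $\gamma_F$.
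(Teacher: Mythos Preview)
Your approach --- apply the decomposition of Prop.~\ref{prop:9} and take odd parts --- is essentially the same as the paper's, which phrases it via Prop.~\ref{prop:1}: write $D_F=D_0+D_1+T_0+T_1$ with $D_i$ odd, $T_i$ even, and use $\gamma_FD_F\gamma_F+D_F=2(T_0+T_1)=0$. The paper is terser and leaves the $D_R$ bookkeeping implicit; you spell it out, which is fine.

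The justification in your last paragraph, however, is not correct. From eq.~\eqref{eq:CF} the entries of an element of $C_F$ at positions $\{1,5\}\times\{1,5\}$ form an \emph{arbitrary} matrix $q\in M_2(\C)$, and Lemma~\ref{lemma:2} imposes no further restriction on the $\otimes\,e_{11}$ component of an element of $(A_F^\circ)'$. Hence a self-adjoint involution $\gamma_F\in A_F'\cap(A_F^\circ)'$ may carry, for instance, $q=\bigl(\begin{smallmatrix}0&1\\1&0\end{smallmatrix}\bigr)$ on that subblock, and conjugation by such a $\gamma_F$ sends $e_{15}\otimes e_{11}\mapsto e_{51}\otimes e_{11}$ rather than acting by a scalar. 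So for a completely general grading, $D_R$ need not have definite parity, and the ``rigidity'' you invoke does not follow from Lemmas~\ref{lemma:1}--\ref{lemma:2} alone. This does not actually harm the lemma as it is used in the paper: the phrase ``depending on the parity of $D_R$'' already presupposes that parity, and the two gradings of \S\ref{sec:Sgamma}--\S\ref{sec:Ngamma} to which the lemma is applied are diagonal on that subblock. But you should either drop the general claim or add the hypothesis that $D_R$ has definite parity under $\gamma_F$.
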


\begin{proof}
From Prop.~\ref{prop:1}, we can write $D_F=D_0+D_1+T_0+T_1$ where $D_0,T_0\in (A_F^\circ)'$, $D_1,T_1\in A_F'$,
$D_0,D_1$ are odd and $T_0,T_1$ are even.
From
$$
\gamma_FD_F\gamma_F+D_F=2(T_0+T_1)=0
$$
we deduce $T_0+T_1=0$, so that $D_F=D_0+D_1$ with both $D_0$ and $D_1$ odd operators.
\end{proof}

\begin{lemma}\label{lemma:9}
Let $\gamma_F$ be a grading operator either commuting or anticommuting with $J_F$. Any odd Dirac operator
satisfying the 1st order condition and commuting with $J_F$ can be written in the form $D_F=D_0+J_FD_0J_F
+\kappa D_R$ as in Prop.~\ref{prop:9}, with $D_0$ an odd operator and $\kappa=0$ or $1$ depending on
the parity of $D_R$.
\end{lemma}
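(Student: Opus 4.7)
The plan is to combine Lemma~\ref{lemma:8} with the absorption argument from the proof of Proposition~\ref{prop:2}, using the (anti)commutation of $\gamma_F$ with $J_F$ to keep track of oddness throughout the reassembly.

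First, I apply Lemma~\ref{lemma:8} to the odd 1st-order operator $D_F$ and write $D_F=D_0+D_1+\kappa D_R$ with $D_0,D_1$ both odd and selfadjoint, $D_0\in(A_F^\circ)'$, $D_1\in A_F'$, and $\kappa\in\{0,1\}$ dictated by the parity of $D_R$. Since $D_R$ commutes with $J_F$ by~\eqref{eq:UpsilonR} and $D_F$ commutes with $J_F$ by hypothesis, the remainder $D_0+D_1$ also commutes with $J_F$.

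Next, I reproduce the manipulation from the proof of Proposition~\ref{prop:2} applied to this remainder. Using the bijection $x\mapsto J_F\bar xJ_F$ between $A_F'$ and $(A_F^\circ)'$, the identity $J_F(D_0+D_1)J_F=D_0+D_1$ rewrites as
\[
(J_FD_0J_F-D_1)+(J_FD_1J_F-D_0)=0,
\]
so both summands must lie in $A_F'\cap(A_F^\circ)'$. Setting $D':=D_1-J_FD_0J_F$, a direct check yields $J_FD'J_F=D'$, and hence $\tilde D_0:=D_0+D'/2\in(A_F^\circ)'$ satisfies $D_0+D_1=\tilde D_0+J_F\tilde D_0J_F$. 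Replacing $\tilde D_0$ by its hermitian part, exactly as in the final lines of the proof of Proposition~\ref{prop:2}, makes it selfadjoint without changing the sum $\tilde D_0+J_F\tilde D_0J_F$.

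The main obstacle, and the only place the hypothesis on $\gamma_F$ is used, is to ensure that the resulting $\tilde D_0$ is odd. This reduces to showing that $X\mapsto J_FXJ_F$ preserves parity. From $\gamma_FJ_F=\epsilon''J_F\gamma_F$ with $\epsilon''\in\{\pm1\}$ and $\gamma_F^2=1$ a short computation gives
\[
\gamma_F(J_FXJ_F)\gamma_F=(\epsilon'')^2 J_F(\gamma_FX\gamma_F)J_F=J_F(\gamma_FX\gamma_F)J_F
\]
for every $X$, so conjugation by $J_F$ commutes with the grading automorphism. Consequently $J_FD_0J_F$ is odd, $D'$ is a difference of odd operators, and $\tilde D_0=D_0+D'/2$ is odd; passing to its hermitian part preserves this since $X^*$ is odd whenever $X$ is. Without the (anti)commutation hypothesis on $\gamma_F$ and $J_F$ the absorption step $D_0\mapsto\tilde D_0$ could mix parities, and the statement would fail.
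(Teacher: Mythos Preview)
Your argument is correct and rests on the same key observation as the paper's: conjugation by $J_F$ preserves parity whenever $\gamma_FJ_F=\pm J_F\gamma_F$, which you make fully explicit via $\gamma_F(J_FXJ_F)\gamma_F=J_F(\gamma_FX\gamma_F)J_F$.

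The only difference is the order of the two reductions. You first invoke Lemma~\ref{lemma:8} to get odd $D_0,D_1$, then run the absorption from Proposition~\ref{prop:2} to force $D_1=J_FD_0J_F$, checking that oddness survives the passage $D_0\mapsto\tilde D_0$. The paper's (very terse) proof goes the other way: start from the Prop.~\ref{prop:9} form $D_F=D_0+J_FD_0J_F+D_R$ and then apply the even/odd splitting of Lemma~\ref{lemma:8} to $D_0$; parity preservation under $J_F$-conjugation then guarantees that the even part of $D_0$ contributes only an even summand to $D_F-\kappa D_R$, hence vanishes. Both routes are equivalent, and yours has the virtue of spelling out the step the paper compresses into one sentence. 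One small redundancy: since the $D_0,D_1$ from Lemma~\ref{lemma:8} are already selfadjoint and $(J_FD_0J_F)^*=J_FD_0^*J_F$, your $\tilde D_0=D_0+D'/2$ is automatically hermitian, so the final ``pass to the hermitian part'' is unnecessary (though harmless).
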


\begin{proof}
It follows from Lemma \ref{lemma:8}.
Since $D_1=J_FD_0J_F$, the condition $\gamma_FD_0\gamma_F=-D_0$ implies $\gamma_FD_1\gamma_F=-D_1$.
\end{proof}

\noindent
We now study the form of $D_0$ for Dirac operators of the type described by Lemma \ref{lemma:8} or \ref{lemma:9}
for two natural choices of the grading operator (we ignore $D_1$, cf.~Remark~\ref{rem:12}). It is worth noticing that both such gradings anticommute with $J_F$, and then give real spectral triples of KO-dimension $6$.

\subsection{The standard grading}\label{sec:Sgamma}
The grading in \cite{CC97,Con06,CCM06,CM08} (the chirality operator) is:
\begin{equation}\label{eq:Sgamma}
\gamma_F=
\begin{bmatrix} 1_2 \\ & \!\!-1_2 \\ && 0_4 \end{bmatrix}
\otimes 1_4
+
\begin{bmatrix} 0_4 \\ & \!\!-1_4 \end{bmatrix}
\otimes
\begin{bmatrix} 1_2 \\ & \!\!-1_2 \end{bmatrix} \;.
\end{equation}
It follows from Lemma \ref{lemma:2} that any $D_0$ anticommuting with \eqref{eq:Sgamma} has the form:
$$
D_0=\text{\footnotesize$\begin{bmatrix}
\zero & \zero & * & * & \zero & \circledast & \circledast & \circledast \\
\zero & \zero & * & * & * & \circledast & \circledast & \circledast \\
* & * & \zero & \zero & \zero & \zero & \zero & \zero \\
* & * & \zero & \zero & \zero & \zero & \zero & \zero \\
\zero & * & \zero & \zero & \zero & \zero & \zero & \zero \\
\circledast & \circledast & \zero & \zero & \zero & \zero & \zero & \zero \\
\circledast & \circledast & \zero & \zero & \zero & \zero & \zero & \zero \\
\circledast & \circledast & \zero & \zero & \zero & \zero & \zero & \zero
\end{bmatrix}$}\otimes e_{11}
+\text{\footnotesize$\begin{bmatrix}
\zero & \zero & * & * & \zero & \zero & \zero & \zero \\
\zero & \zero & * & * & \zero & \zero & \zero & \zero \\
* & * & \zero & \zero & \zero & \zero & \zero & \zero \\
* & * & \zero & \zero & \zero & \zero & \zero & \zero \\
\zero & \zero & \zero & \zero & \zero & \zero & \zero & \zero \\
\zero & \zero & \zero & \zero & \zero & \zero & \zero & \zero \\
\zero & \zero & \zero & \zero & \zero & \zero & \zero & \zero \\
\zero & \zero & \zero & \zero & \zero & \zero & \zero & \zero
\end{bmatrix}$}\otimes (1-e_{11}) \;,\hspace*{-5pt}
$$
where the asterisks indicate the only positions where one can have non-zero matrix entries.
The circled entries ($\circledast$) are the ones
that are not allowed by the non-standard grading \eqref{eq:Ngamma}.

\subsection{A non-standard grading}\label{sec:Ngamma}
Let
\begin{equation}\label{eq:Ngamma}
\gamma_F=
\begin{bmatrix} 1_2 \\ & \!\!-1_2 \\ && 0_4 \end{bmatrix}
\otimes
\begin{bmatrix} 1 \\ & \!\!-1_3 \end{bmatrix}
+
\begin{bmatrix} 0_4 \\ & \!\!-1 \\ && 1_3 \end{bmatrix}
\otimes
\begin{bmatrix} 1_2 \\ & \!\!-1_2 \end{bmatrix} \;.
\end{equation}
This operator assigns opposite parity to chiral leptons and quarks
(left resp.~right handed leptons have the same parity of right resp.~left handed quarks).

Again from Lemma \ref{lemma:2}, any $D_0$ anticommuting with \eqref{eq:Ngamma} has the form:
\begin{align*}
D_0 = &\text{\footnotesize$\left[\begin{matrix}
\zero & \zero & * & * & \multicolumn{1}{|c}{\zero} & \zero & \zero & \zero \\
\zero & \zero & * & * & \multicolumn{1}{|c}{*} & \zero & \zero & \zero \\
* & * & \zero & \zero & \multicolumn{1}{|c}{\zero} & \circledast & \circledast & \circledast \\
* & * & \zero & \zero & \multicolumn{1}{|c}{\zero} & \circledast & \circledast & \circledast \\
\cline{5-8}
\zero & * & \zero & \zero & \zero & \circledast & \circledast & \circledast \\
\zero & \zero & \circledast & \circledast & \circledast & \zero & \zero & \zero \\
\zero & \zero & \circledast & \circledast & \circledast & \zero & \zero & \zero \\
\zero & \zero & \circledast & \circledast & \circledast & \zero & \zero & \zero
\end{matrix}\,\,\right]$}\otimes e_{11}
+\text{\footnotesize$\begin{bmatrix}
\zero & \zero & * & * & \zero & \zero & \zero & \zero \\
\zero & \zero & * & * & \zero & \zero & \zero & \zero \\
* & * & \zero & \zero & \zero & \zero & \zero & \zero \\
* & * & \zero & \zero & \zero & \zero & \zero & \zero \\
\zero & \zero & \zero & \zero & \zero & \zero & \zero & \zero \\
\zero & \zero & \zero & \zero & \zero & \zero & \zero & \zero \\
\zero & \zero & \zero & \zero & \zero & \zero & \zero & \zero \\
\zero & \zero & \zero & \zero & \zero & \zero & \zero & \zero
\end{bmatrix}$}\otimes (1-e_{11})
\\
+ &\text{\footnotesize$\begin{bmatrix}
\zero & \zero & \zero & \zero & \zero & \zero & \zero & \zero \\
\zero & \zero & \zero & \zero & \zero & \zero & \zero & \zero \\
\zero & \zero & \zero & \zero & \zero & \zero & \zero & \zero \\
\zero & \zero & \zero & \zero & \zero & \zero & \zero & \zero \\
\zero & \zero & \zero & \zero & \zero & \circledast & \circledast & \circledast \\
\zero & \zero & \zero & \zero & \circledast & \zero & \zero & \zero \\
\zero & \zero & \zero & \zero & \circledast & \zero & \zero & \zero \\
\zero & \zero & \zero & \zero & \circledast & \zero & \zero & \zero
\end{bmatrix}$}\otimes e_{22}
+\text{\footnotesize$\begin{bmatrix}
\zero & \zero & \zero & \zero & \zero & \zero & \zero & \zero \\
\zero & \zero & \zero & \zero & \zero & \zero & \zero & \zero \\
\zero & \zero & \zero & \zero & \zero & \zero & \zero & \zero \\
\zero & \zero & \zero & \zero & \zero & \zero & \zero & \zero \\
\zero & \zero & \zero & \zero & \zero & \circledast & \circledast & \circledast \\
\zero & \zero & \zero & \zero & \circledast & \zero & \zero & \zero \\
\zero & \zero & \zero & \zero & \circledast & \zero & \zero & \zero \\
\zero & \zero & \zero & \zero & \circledast & \zero & \zero & \zero
\end{bmatrix}$}\otimes (e_{33}+e_{44})
\end{align*}
The circled entries ($\circledast$) are the ones that are not allowed by the standard grading \eqref{eq:Sgamma}.

\subsection{Chamseddine-Connes's Dirac operator}\label{sec:CC}
Let
$$
\hspace*{-3pt}D_0=\!\text{\footnotesize$\left[\begin{array}{cccc|cccc}
\zero & \zero & \bar\Upsilon_\nu\!\! & \zero & \zero & \zero & \zero & \zero \\
\zero & \zero & \zero & \bar\Upsilon_e\!\! & \bar\Omega & \zero & \zero & \zero \\
\Upsilon_\nu\!\! & \zero & \zero & \zero & \zero & \zero & \zero & \zero \\
\zero & \Upsilon_e\!\! & \zero & \zero & \zero & \zero & \zero & \zero \\
\hline
\zero & \Omega & \zero & \zero & \zero & \Delta & \zero & \zero \\
\zero & \zero & \zero & \zero & \Delta & \zero & \zero & \zero \\
\zero & \zero & \zero & \zero & \zero & \zero & \zero & \zero \\
\zero & \zero & \zero & \zero & \zero & \zero & \zero & \zero
\end{array}\right]$}\otimes\;e_{11}
+\text{\footnotesize$\left[\begin{array}{cccc|cccc}
\zero & \zero & \bar\Upsilon_u\!\! & \zero & \zero & \zero & \zero & \zero \\
\zero & \zero & \zero & \bar\Upsilon_d\!\! & \zero & \zero & \zero & \zero \\
\Upsilon_u\!\! & \zero & \zero & \zero & \zero & \zero & \zero & \zero \\
\zero & \Upsilon_d\!\! & \zero & \zero & \zero & \zero & \zero & \zero \\
\hline
\zero & \zero & \zero & \zero & \zero & \Delta & \zero & \zero \\
\zero & \zero & \zero & \zero & \Delta & \zero & \zero & \zero \\
\zero & \zero & \zero & \zero & \zero & \zero & \zero & \zero \\
\zero & \zero & \zero & \zero & \zero & \zero & \zero & \zero
\end{array}\right]$}\otimes\;(1-e_{11}) \,,
$$
where all $\Upsilon$'s and $\Omega$ are complex numbers and $\Delta\in\R$.
The Dirac operator of Chamseddine-Connes \cite{CC97,Con06,CCM06,CM08} is
$$
D_F=D_0+J_FD_0J_F+D_R
$$
with $D_R$ given by \eqref{eq:UpsilonR}, $D_0$ as above, and $\Omega=\Delta=0$.
It is compatible with both gradings of previous sections.

\section{The property (M)}\label{sec:6}

Suppose $H$ is a finite-dimensional complex Hilbert space and $A$, $B$ two (real or complex) unital $C^*$-subalgebras of $\mathrm{End}_{\C}(H)$, that commute one with the other. Let $Z(A)$ be the center
of $A$ and $Z(B)$ be the center of $B$.
Note that
\begin{equation}\label{eq:ZAZB}
A\cap B\subset Z(A)\cap Z(B)\subset A'\cap B'
\end{equation}
and that $Z(A)=Z(A')=A\cap A'$, and similarly for $B$.

Recall that $H$ is a Morita equivalence $A$-$B^\circ$-bimodule if{}f $A=B'$,
which is equivalent to the condition $A'=B$ (by von Neumann Bicommutant Theorem:
$A''=A$ and $B''=B$ in the finite-dimensional case). 
\begin{lemma}\label{lemma:10}
If $H$ is a Morita equivalence $A$-$B^\circ$-bimodule, then the inclusions \eqref{eq:ZAZB} are equalities.
\end{lemma}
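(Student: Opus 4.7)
The plan is to observe that Morita equivalence collapses all three sets in the chain \eqref{eq:ZAZB} to the single intersection $A\cap B$, by repeatedly substituting $A=B'$ and $B=A'$.

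First I would spell out the hypothesis. By the Bicommutant Theorem in the finite-dimensional setting, the Morita equivalence assumption $A=B'$ is equivalent to $B=A'$. These two identifications are the only inputs needed.

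Next I would evaluate the right-hand end of \eqref{eq:ZAZB}: substituting $A'=B$ and $B'=A$ gives
\[
A'\cap B' \;=\; B\cap A \;=\; A\cap B,
\]
so the outermost inclusion $A\cap B\subset A'\cap B'$ is already an equality. This forces every term in between to equal $A\cap B$ as well, but I would verify it directly to make the proof self-contained.

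For the middle term, using $Z(A)=A\cap A'$ and $A'=B$, together with the analogous identity for $B$, I get
\[
Z(A) \;=\; A\cap A' \;=\; A\cap B \;=\; B\cap B' \;=\; Z(B),
\]
hence $Z(A)\cap Z(B) = A\cap B$. Combining with the previous step shows that all three sets in \eqref{eq:ZAZB} coincide with $A\cap B$, which completes the proof. There is no real obstacle here; the content of the lemma is simply the observation that, under Morita equivalence, each algebra is the commutant of the other, so centers and commutants become indistinguishable from the mutual intersection.
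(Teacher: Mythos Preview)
Your proof is correct and follows essentially the same approach as the paper: the paper's one-line proof invokes $Z(A)=A\cap A'$ together with $A'=B$ (and the analogous statements for $B$), which is exactly the substitution you carry out. You simply spell out the details more explicitly.
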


\begin{proof}
It follows trivially from $Z(A)=A\cap A'$ and $A'=B$, and similar for $Z(B)$.
\end{proof}

\begin{prop}\label{prop:12}~\vspace{-5pt}
\begin{list}{}{\itemsep=0pt \leftmargin=1em}
\item[i)] If $D_F$ and $\gamma_F$ are as in \S\ref{sec:Sgamma},
the property (M), with or without grading, is \underline{not} satisfied.

\pagebreak

\item[ii)] Let $D_F$ and $\gamma_F$ be as in \S\ref{sec:Ngamma}. If the property (M),
with or without grading, holds then\vspace{-5pt}
\begin{list}{$\bullet$}{\itemsep=0pt \leftmargin=1em}
\item each summand in $D_0$ must have at least one circled coefficient ($\circledast$) different from zero;
\item in each of the first two summands, in both the 1st and 2nd row there must be at least one non-zero element;
\item in the first summand: at least one element in the 5th row must be non-zero and at least
one element in the upper-right block must be non-zero.
\end{list}
\end{list}
\end{prop}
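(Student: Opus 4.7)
The plan is to reformulate property (M) as a commutant condition and exhibit, for each failing configuration, an explicit witness $X$ obstructing the equality. By the bicommutant theorem, property (M) (with or without grading) amounts to $\Cl(A_F)_{\mathrm{o}}'=(A_F^\circ)_{\C}$ (resp.~$\Cl(A_F)_{\mathrm{e}}'=(A_F^\circ)_{\C}$). Using $[X,a]=0$ for $X\in A_F'$ and $a\in A_F$, a direct commutator manipulation gives $X\in\Cl(A_F)_{\mathrm{o}}'$ iff $X\in A_F'$ and $[X,D_F]\in A_F'$, with the graded version adding $[X,\gamma_F]=0$. To rule out property (M) it thus suffices to produce $X\in A_F'\setminus(A_F^\circ)_{\C}$ satisfying these commutation conditions. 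All our witnesses will be of the ``matrix unit'' form $X=M\otimes e$ with $M\in C_F$ a diagonal idempotent (so $X\in A_F'$ by Lemma~\ref{lemma:1}) and $e$ a diagonal projector in $\{e_{11},\,1-e_{11},\,e_{22},\,e_{33}+e_{44}\}\subset M_4(\C)$; non-membership in $(A_F^\circ)_{\C}$ is then read off from the explicit block form $\mathrm{diag}(1_4,0_4)\otimes Y_1+\mathrm{diag}(0_4,1_4)\otimes Y_2$ of \S\ref{sec:thedata}.

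For part (i), take $X=(e_{66}+e_{77}+e_{88})\otimes e_{22}$. The decomposition $D_0=D_0^{(1)}\otimes e_{11}+D_0^{(2)}\otimes(1-e_{11})$ of Prop.~\ref{prop:9} specialized to the standard grading yields $[X,D_0^{(1)}\otimes e_{11}]=0$ (because $e_{22}e_{11}=0$), and $[X,D_0^{(2)}\otimes(1-e_{11})]=0$ because $D_0^{(2)}$ lies in the upper-left $4\times 4$ block and commutes with $M=e_{66}+e_{77}+e_{88}$. Likewise $[X,D_R]=0$, while $[X,D_1]\in A_F'$ is automatic and $[X,\gamma_F]=0$ by the diagonality of all factors. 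Since $M$ is supported only in rows $6$-$8$, while any element of $(A_F^\circ)_{\C}$ has equal $M_4$-blocks at positions $(5,5),(6,6),(7,7),(8,8)$ of $M_8$, one has $X\notin(A_F^\circ)_{\C}$. Thus $X$ witnesses the failure of property~(M) in both variants.

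For part (ii) the three listed constraints are proved contrapositively, each by producing a counterexample $X=M\otimes e$:
\begin{list}{}{\itemsep=0pt \leftmargin=1em}
\item[(a)] if all $\circledast$ entries in summand $k\in\{1,3,4\}$ vanish, then $D_0^{(k)}$ is supported away from rows/columns $6$-$8$, so $M=e_{66}+e_{77}+e_{88}$ commutes with it; take $e$ to be the corresponding right-factor projector;
\item[(b)] if the $1$st (resp.~$2$nd) row of summand $k\in\{1,2\}$ vanishes, Hermiticity of $D_0^{(k)}$ also forces the $1$st (resp.~$2$nd) column to vanish, so $M=e_{11}$ (resp.~$M=e_{22}$) in $M_8$ commutes with $D_0^{(k)}$;
\item[(c)] if row $5$ of summand $1$ vanishes, Hermiticity forces column $5$ zero and $M=e_{55}$ works; if the upper-right $4\times 4$ block of summand $1$ vanishes, Hermiticity kills the lower-left block, leaving $D_0^{(1)}$ supported only in the upper-left and lower-right $4\times 4$ blocks, which commute with $M=\mathrm{diag}(0_4,1_4)$.
\end{list}

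In each case, the remaining summands of $D_0$ contribute nothing to $[X,D_0]$ because $eR_j=0$ for $j\neq k$ (where $R_j$ is the right-factor projector of summand $j$); the commutator $[X,D_R]$ lies in $A_F'$ since $[M,e_{51}]$ and $[M,e_{15}]$ always land in the $q$-block of $C_F$; and $[X,\gamma_F]=0$ follows from the diagonality of $\gamma_F$'s factors. The non-membership $X\notin(A_F^\circ)_{\C}$ is verified from the fact that $(A_F^\circ)_{\C}$ forces equal $M_4$-blocks along the $M_8$-diagonal at positions $(1,1)$-$(4,4)$ and at $(5,5)$-$(8,8)$, whereas each of our choices of $M$ breaks exactly one such equality. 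The main obstacle is a careful enumeration of allowed positions in each $D_0^{(k)}$ under the grading and Hermiticity constraints; once that is tabulated, every commutator check reduces to elementary matrix-unit arithmetic.
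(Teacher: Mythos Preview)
Your overall strategy matches the paper's: exhibit an explicit operator $X\in\Cl(A_F)_{\mathrm{e}}'$ (equivalently, in $A_F'$, commuting with $\gamma_F$, and with $[X,D_F]\in A_F'$) that fails to lie in $(A_F^\circ)_{\C}$. The paper phrases this through Lemma~\ref{lemma:10} (requiring additionally $X\in(A_F^\circ)'$ and checking $X\notin Z((A_F^\circ)_{\C})$), while you go directly for $X\notin(A_F^\circ)_{\C}$; your framing is arguably cleaner and avoids the detour through centers. Your specific witnesses differ in places (e.g.\ for part~(i) you use $(e_{66}+e_{77}+e_{88})\otimes e_{22}$ where the paper uses $e_{55}\otimes(1-e_{11})$), but they are equally valid.

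However, there are two genuine gaps in the execution. First, your blanket justification ``$eR_j=0$ for $j\neq k$'' is false whenever $R_2=1-e_{11}$ is involved: for $k=3$ one has $e_{22}R_2=e_{22}\neq 0$, for $k=4$ one has $(e_{33}+e_{44})R_2\neq 0$, and for $k=2$ one has $(1-e_{11})R_3=e_{22}\neq 0$ and $(1-e_{11})R_4\neq 0$. Your witnesses still commute with these extra summands, but for a different reason: the \emph{left} factor $M$ has support disjoint from that of $D_0^{(j)}$ (e.g.\ $D_0^{(2)}$ lives in rows/columns $1$--$4$ while $M=e_{66}+e_{77}+e_{88}$ lives in $6$--$8$; conversely $D_0^{(3)},D_0^{(4)}$ live in rows/columns $5$--$8$ while $M=e_{11}$ or $e_{22}$ lives in $1$--$2$). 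You need to state this explicitly.

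Second, your non-membership criterion ``$M$ breaks one of the equal-block constraints'' fails for the witness $M=\mathrm{diag}(0_4,1_4)$ in case~(c): here positions $1$--$4$ all carry $0$ and positions $5$--$8$ all carry $e_{11}$, so no block equality is broken. The operator $X=\mathrm{diag}(0_4,1_4)\otimes e_{11}$ is genuinely outside $(A_F^\circ)_{\C}$, but the reason is the \emph{cross-block} constraint $(Y_1)_{11}=(Y_2)_{11}=\lambda$ in the description of $A_F^\circ$ (forcing $\lambda=0$ from $Y_1=0$ and $\lambda=1$ from $Y_2=e_{11}$). Either invoke this finer constraint, or follow the paper and use the complementary projector $M=\mathrm{diag}(1_4,0_4)$ instead---wait, that has the same issue. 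The paper's choice $X=(e_{11}+e_{22}+e_{33}+e_{44})\otimes e_{11}$ also needs the cross-block argument; in the paper this is handled by checking $X\notin Z((A_F^\circ)_{\C})$, whose explicit description makes the failure transparent.
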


\begin{proof}
It is enough to give the proof for the property (M) with grading (the weaker one).

We apply Lemma \ref{lemma:10} to 
$A=\Cl(A_F)_{\mathrm{e}}$, $B=(A_F^\circ)_{\C}$
and $H=H_F$.
Let $D_0$ and $\gamma_F$ be as in \S\ref{sec:Sgamma} or \S\ref{sec:Ngamma}.
Note that $A$ is generated by $A_F$, $[D_0,A_F]$ 
and $\gamma_F$. Moreover, due to the 1st order condition, $A$ and $B$ are mutually commuting.

Any operator $X\in A_F'\cap (A_F^\circ)'$ commuting with $D_0$ and $\gamma_F$ belongs to $A'\cap B'$
(since it also commute with $[D_0,A_F]$). If we can exhibit such an $X$ and prove that $X\notin Z(B)$,
then the property (M) with grading is not satisfied. Note that $Z(B)$ has elements:
$$
a^\circ=
\begin{bmatrix} 1 & 0 \\ \;0\; & \;0\; \end{bmatrix}\otimes
\left[\!
\begin{array}{c|ccc}
\lambda & \;0\; & \;0\; & \;0\; \\
\hline
\begin{matrix} \;0\; \\ 0 \\ 0 \end{matrix} && \alpha 1_3
\end{array}
\!\right]
+\begin{bmatrix} 0 & \;0\; \\ \;0\; & 1 \end{bmatrix}\otimes\left[\!
\begin{array}{c|c}
\begin{matrix} \;\lambda\; & \;0\;  \\ 0 & \lambda' \end{matrix}  &
\begin{matrix} \;0\; & \;0\;  \\ 0 & 0 \end{matrix} \\
\hline
\begin{matrix} \;0\; & \;0\; \\ 0 & 0 \end{matrix} & \beta 1_2
\end{array}
\!\right] \;,
$$
with $\lambda,\lambda',\alpha,\beta\in\C$.
For $D_0,\gamma_F$ as in \S\ref{sec:Sgamma}, the operator $X=e_{55}\otimes (1-e_{11})$ does the job:\vspace{-2pt}
\begin{itemize}\itemsep=0pt
\item[1)]
it commutes with $D_0$ and $\gamma_F$,
\item[2)]
it belongs to $A_F'=C_F\otimes M_4(\C)$ ($e_{55}\in C_F$: take $q_{22}=1$ and all other coefficients zero in \eqref{eq:CF}),
\item[3)]
it belongs to $(A_F^\circ)'$ (cf.~Lemma \ref{lemma:2}),
\item[4)]
and it does not belong to $Z(B)$.\vspace{-2pt}
\end{itemize}
Let now $D_0$ and $\gamma_F$ be as in \S\ref{sec:Ngamma}.
Concerning the first summand:\vspace{-2pt}
\begin{itemize}\itemsep=0pt
\item if all the circled terms ($\circledast$) are zero, then $X=(e_{66}+e_{77}+e_{88})\otimes e_{11}$ satisfies the conditions (1-4) above;
\item if all the elements in the 1st resp.~2nd row are zero (and then also in 1st resp.~2nd column, by hermiticity), then
$X=e_{11}\otimes e_{11}$ resp.~$X=e_{22}\otimes e_{11}$ satisfies (1-4);
\item if all the elements in the 5th row are zero, similarly by hermiticity $X=e_{55}\otimes e_{11}$ satisfies the conditions (1-4) above;
\item if all the elements in the upper-right blocks are zero, then $X=(e_{11}+e_{22}+e_{33}+e_{44})\otimes e_{11}$ satisfies the conditions (1-4) above.\vspace{-2pt}
\end{itemize}
Concerning the second summand:\vspace{-2pt}
\begin{itemize}
\item if all the elements in the 1st resp.~2nd row are zero (and then also in 1st resp.~2nd column, by hermiticity), then
$X=e_{11}\otimes (1-e_{11})$ resp.~$X=e_{22}\otimes (1-e_{11})$ satisfies the conditions (1-4) above;\vspace{-2pt}
\end{itemize}

\pagebreak

\noindent
Concerning the third resp.~fourth summands:\vspace{-2pt}
\begin{itemize}
\item if the circled terms ($\circledast$) are zero, $X=e_{55}\otimes e_{22}$ resp.~$X=e_{55}\otimes (e_{33}+e_{44})$ satisfies the conditions (1-4) above.
\qedhere
\end{itemize}
\end{proof}

\begin{cor}
Let $D_F$ be as in \S\ref{sec:CC} and $\gamma_F$ one of the two gradings \eqref{eq:Sgamma} or \eqref{eq:Ngamma}. If $\Delta=0$ or $\Omega=0$, the property (M), with or without grading, is not satisfied.
\end{cor}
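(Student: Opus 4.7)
The plan is to reduce the corollary directly to Proposition~\ref{prop:12}, using the explicit form of the Chamseddine-Connes $D_0$ in \S\ref{sec:CC}.

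For the standard grading~\eqref{eq:Sgamma} there is nothing to prove: Proposition~\ref{prop:12}(i) already asserts that property (M), with or without grading, fails for \emph{every} Dirac operator satisfying the 1st order condition, independently of the values of $\Omega$ and $\Delta$. So I may assume throughout that $\gamma_F$ is the non-standard grading~\eqref{eq:Ngamma}.

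The substantive step is to match the two-summand presentation of $D_0$ in \S\ref{sec:CC} with the four-summand form given in \S\ref{sec:Ngamma}. Writing $1-e_{11}=e_{22}+e_{33}+e_{44}$, I would split the $(1-e_{11})$-piece of $D_0$ so that the shared upper-left $4\times 4$ block carrying $\Upsilon_u,\Upsilon_d$ lands in the second summand (this is the block denoted $b$ in Lemma~\ref{lemma:2}, which by the common-$b$ constraint must agree on the $e_{22}$ and $(e_{33}+e_{44})$ parts), while the two identical copies of the $\Delta$-block at positions $(5,6)$ and $(6,5)$ land in the third and fourth summands. A quick inspection of the Chamseddine-Connes matrix then shows that the only nonzero entry of the upper-right $4\times 4$ block of the first summand is $\bar\Omega$ at position $(2,5)$, and the only nonzero circled entries anywhere across the four summands are the $\Delta$'s.

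With this identification in place, the result follows by reading off conflicts with the necessary conditions of Proposition~\ref{prop:12}(ii). If $\Omega=0$, the upper-right $4\times 4$ block of the first summand is identically zero, which violates the third bullet of that proposition. If $\Delta=0$, the third and fourth summands vanish entirely, so neither contains a nonzero circled coefficient, violating the first bullet. In each case property (M), with or without grading, cannot hold. I do not foresee any serious obstacle; the only thing to keep straight is the bookkeeping of Lemma~\ref{lemma:2}, namely that the quark Dirac masses $\Upsilon_u,\Upsilon_d$ must sit in the second (shared) summand and not in the third or fourth, which is exactly what makes the $\Delta=0$ argument work.
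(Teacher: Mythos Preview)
Your proposal is correct and follows the same route as the paper: the corollary is stated without proof precisely because it is meant to be read off directly from Proposition~\ref{prop:12}, and that is exactly what you do. Your careful matching of the Chamseddine--Connes $D_0$ with the four-summand template of \S\ref{sec:Ngamma} (in particular, pushing the $\Upsilon_u,\Upsilon_d$ block into the second summand and the $\Delta$-block into the third and fourth) is the right bookkeeping, and the violations you identify are correct; in fact for $\Delta=0$ the first summand also loses all its circled entries, giving a second witness to the failure of the first bullet.
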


The operator $D_F$ in \S\ref{sec:CC}, with $\Omega\neq 0$ and $\Delta\neq 0$, represents a minimal
modification of the Dirac operator of \cite{CC97,Con06,CCM06,CM08} which satisfies all the conditions
in Prop.~\ref{prop:12}. We will now show that for such an operator, the Morita condition is satisfied.

\subsection{Morita with a grading}
This section is devoted to prove the following theorem.

\begin{thm}\label{thm:1}
Let $\gamma_F$ be as in \S\ref{sec:Ngamma},
$D_F$ as in \S\ref{sec:CC} with all coefficients different from zero,
and assume that at least one of the following conditions holds:
$$
1.\qquad \Upsilon_\nu\neq \pm\Upsilon_u \;,\hspace{4cm}
2.\qquad \Upsilon_e\neq \pm\Upsilon_d \;.
$$
Then, the spectral triple:\vspace{-3pt}
\begin{itemize}\itemsep=0pt
\item[i)] does not satisfy the property (M);
\item[ii)] it satisfies the property (M) with grading.
\vspace{-3pt}
\end{itemize}
As a corollary, $\Cl(A_F)_{\mathrm{o}}\neq\Cl(A_F)_{\mathrm{e}}$,
so $\gamma_F\notin\Cl(A_F)_{\mathrm{o}}$.
\end{thm}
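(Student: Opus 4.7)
\medskip\noindent
\textbf{Proof plan.} Both inclusions behind property (M) with grading are automatic at the outset: $\Cl(A_F)_{\mathrm{e}}\subseteq (A_F^\circ)'$ by the 1st order condition together with $\gamma_F\in(A_F^\circ)'$ (which follows because $\gamma_F$ commutes with $A_F$ and $J_F\gamma_F=-\gamma_FJ_F$ in KO-dimension $6$), and $\Cl(A_F)_{\mathrm{o}}\subseteq\Cl(A_F)_{\mathrm{e}}$ by construction. The plan is to prove (ii) by establishing the reverse inclusion $(A_F^\circ)'\subseteq\Cl(A_F)_{\mathrm{e}}$ via explicit generation, and then to extract (i) and the corollary from the single statement $\gamma_F\notin\Cl(A_F)_{\mathrm{o}}$.

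\medskip\noindent
\emph{Part (ii).} By Lemma~\ref{lemma:2}, $(A_F^\circ)'\cong M_8(\C)\oplus M_4(\C)\oplus M_4(\C)\oplus M_4(\C)$, parameterized by $(a,b,c,d)$. I would compute the three explicit families of commutators $[D_0,\pi(\lambda,0,0)]$, $[D_0,\pi(0,q,0)]$ and $[D_0,\pi(0,0,m)]$; each is a sparse matrix whose nonzero entries are linear combinations of the six parameters $\Upsilon_\nu,\Upsilon_e,\Upsilon_u,\Upsilon_d,\Omega,\Delta$ times entries of $\lambda,q,m$. Forming left/right $A_F$-multiples and products of two commutators (to land in the even part), then multiplying by $\gamma_F$ to cross parities, one generates an explicit family of matrix units exhausting each of the four summands of $(A_F^\circ)'$. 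The hypothesis $\Upsilon_\nu\neq\pm\Upsilon_u$ (or the $\downarrow$-analogue) is used exactly here: the $2\times 2$ system whose coefficient matrix is built from $\Upsilon_\nu\pm\Upsilon_u$ must be invertible in order to separate lepton- from quark-entries of the $M_8$ summand. The hypothesis $\Omega\neq 0$ then populates the $\bar\nu_R$--$e_R$ off-diagonal of that summand, while $\Delta\neq 0$ populates the off-diagonal entries in the third and fourth $M_4$ summands (the ones indexed by $c$ and $d$), which are otherwise unreachable.

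\medskip\noindent
\emph{Part (i) and corollary.} If $\gamma_F\notin\Cl(A_F)_{\mathrm{o}}$, then since $\gamma_F\in\Cl(A_F)_{\mathrm{e}}$ one obtains $\Cl(A_F)_{\mathrm{o}}\subsetneq\Cl(A_F)_{\mathrm{e}}$ (the corollary), and since also $\gamma_F\in(A_F^\circ)'$ one obtains $\Cl(A_F)_{\mathrm{o}}\subsetneq(A_F^\circ)'$ (the failure of (M), giving (i)). So everything reduces to excluding $\gamma_F$ from $\Cl(A_F)_{\mathrm{o}}$. The argument is parity-based: $D_0$ anticommutes with $\gamma_F$ while $A_F$ commutes with it, so each $[D_0,a]$ is $\mathrm{Ad}_{\gamma_F}$-odd and $\Cl(A_F)_{\mathrm{o}}$ is $\Z_2$-graded. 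Since $\gamma_F$ is $\mathrm{Ad}_{\gamma_F}$-even, any expression of it inside $\Cl(A_F)_{\mathrm{o}}$ must lie in the even subalgebra, i.e., be a polynomial in $A_F$ and in products of even length in the $[D_0,a]$'s. Repeating the generation of (ii) \emph{without} adjoining $\gamma_F$ as a generator one sees that $\dim_\C\Cl(A_F)_{\mathrm{o}}<\dim_\C\Cl(A_F)_{\mathrm{e}}=112$, because the production of the $\gamma_F$-component of certain matrix units in $(A_F^\circ)'^{\mathrm{ev}}$ genuinely requires the extra generator; hence $\gamma_F\notin\Cl(A_F)_{\mathrm{o}}$.

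\medskip\noindent
\emph{Main obstacle.} The bookkeeping in part (ii) is the heart of the matter: with the explicit block structure of Lemma~\ref{lemma:2} one must verify that the six independent Yukawa-type parameters together with $A_F$ really generate the $112$-dimensional commutant $(A_F^\circ)'$, and identify precisely where the hypotheses $\Upsilon_\nu\neq\pm\Upsilon_u$ or $\Upsilon_e\neq\pm\Upsilon_d$ are essential (to invert the lepton--quark $2\times 2$ system mentioned above). Once this explicit generation is in hand, the parity/dimension argument in (i) produces the corollary essentially for free.
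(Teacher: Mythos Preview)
Your plan for (ii) is workable in principle but takes the route dual to the paper's. Rather than building the $112$-dimensional algebra $(A_F^\circ)'$ explicitly from $A_F$, $[D_0,A_F]$ and $\gamma_F$, the paper passes to commutants: it first isolates four generators $\omega_\nu,\omega_e,\xi,\eta$ of the $A_F$-bimodule $\Omega^1$ (Lemma~\ref{lemma:15}), then computes $\Cl(A_F)_{\mathrm{o}}'$ as the set of elements of $A_F'$ (Lemma~\ref{lemma:1}) commuting with these four, arriving at a $19$-dimensional family parameterized by $(\lambda,m,m')\in\C\times M_3(\C)\times M_3(\C)$. Since this is strictly larger than $(A_F^\circ)_\C$, (i) follows at once. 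Imposing $[\phi,\gamma_F]=0$ then forces $m'\in\C\oplus M_2(\C)$, cutting the commutant down exactly to $(A_F^\circ)_\C$ and giving (ii). The hypothesis $\Upsilon_\nu\neq\pm\Upsilon_u$ or $\Upsilon_e\neq\pm\Upsilon_d$ enters where you expect---one of the diagonal matrices $E^2=\mathrm{diag}(\Upsilon_\nu^2,\Upsilon_u^2 1_3)$ or $F^2$ must be non-scalar to pin down the commutant---so that part of your analysis is on target. The dual computation is far lighter (cut $19$ dimensions to $15$ instead of manufacturing $112$ matrix units) and delivers (i) and (ii) in a single pass, with the corollary then immediate.

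Your argument for (i), by contrast, has a genuine gap. The reduction of (i) and the corollary to the single claim $\gamma_F\notin\Cl(A_F)_{\mathrm{o}}$ is correct, but the justification ``repeating the generation of (ii) without $\gamma_F$ one sees $\dim_\C\Cl(A_F)_{\mathrm{o}}<112$'' is not a proof: a generation procedure yields only a \emph{lower} bound on the dimension of the algebra it targets, never an upper bound. That one particular construction of a matrix unit uses $\gamma_F$ does not preclude another construction that avoids it. To exclude an element from a generated $*$-algebra you need an obstruction---an element of the commutant, a character, a representation---that separates it from the generators. This is exactly what the commutant computation supplies: any $\phi\in\Cl(A_F)_{\mathrm{o}}'$ with $m'\in M_3(\C)\setminus(\C\oplus M_2(\C))$ fails to commute with $\gamma_F$, hence $\gamma_F\notin\Cl(A_F)_{\mathrm{o}}''=\Cl(A_F)_{\mathrm{o}}$. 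Without such an invariant your dimension claim is unsupported, and the parity bookkeeping alone cannot close the gap.
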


\noindent
We need a preliminary lemma.
From now on, we assume that the hypothesis of Thm.~\ref{thm:1} are satisfied.

\begin{lemma}\label{lemma:15}
The $A_F$-bimodule $\Omega^1$ is generated by the four elements:
\begin{align*}
\omega_\nu & =e_{31}\otimes\big(\Upsilon_\nu e_{11}+\Upsilon_u (1-e_{11})\big) \;,&
\xi &= e_{52}\otimes e_{11} \;,\\[3pt]
\omega_e & =e_{42}\otimes\big(\Upsilon_e e_{11}+\Upsilon_d (1-e_{11})\big) \;,&
\eta &=e_{56}\otimes 1 \;,
\end{align*}
and their adjoints.
\end{lemma}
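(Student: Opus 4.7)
The plan is to prove both inclusions: each of $\omega_\nu,\omega_e,\xi,\eta$ and their adjoints lies in $\Omega^1$, and conversely every $[D_0,a]$ with $a\in A_F$ lies in the $A_F$-bimodule they generate. By Remark~\ref{rem:12} we have $[D_F,a]=[D_0,a]$ on $A_F$, and since $\Omega^1$ is complex-linear the $A_F$-bimodule action extends to an $(A_F)_{\C}$-bimodule action; as explained in \S\ref{sec:thedata}, $(A_F)_{\C}$ contains all individual matrix units $e_{ij}$ of the $\Q$-block ($i,j\in\{3,4\}$) and the $M_3(\C)$-block ($i,j\in\{6,7,8\}$), together with the two idempotents $p_1:=e_{11}+e_{55}$ and $p_2:=e_{22}$ coming from the $\C$-summand.

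Writing $D_0=A\otimes e_{11}+B\otimes(1-e_{11})$ with $A,B\in M_8(\C)$ as in \S\ref{sec:CC}, and $a_L\in M_8(\C)$ for the representative of $a=(\lambda,q,m)\in A_F$, one has $[D_0,a]=[A,a_L]\otimes e_{11}+[B,a_L]\otimes(1-e_{11})$. A direct computation shows that the nonzero entries of the $M_8$-factors occur only at sixteen positions, forming the eight Hermitian pairs $(3,1)$--$(1,3)$, $(4,1)$--$(1,4)$, $(3,2)$--$(2,3)$, $(4,2)$--$(2,4)$, $(5,2)$--$(2,5)$, $(5,6)$--$(6,5)$, $(5,7)$--$(7,5)$, $(5,8)$--$(8,5)$; at each such $(i,j)$ the two tensor summands recombine into a single term $c_{ij}(a)\,e_{ij}\otimes r_{ij}$, whose right factor $r_{ij}$ belongs to $\{r_{\omega_\nu},r_{\omega_e},e_{11},1\}$ (up to Hermitian conjugation). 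The key observation is that $(\Upsilon_\nu,\Upsilon_u)$ always appears coupled (and likewise $(\Upsilon_e,\Upsilon_d)$), so that the right factors are exactly those of the four listed generators.

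For the ``$\supseteq$'' direction, each generator is isolated from a single commutator. Taking $a$ to be the matrix unit at $M_8$-position $(3,3)$ (available in $(A_F)_{\C}$ as a complex combination of the identity of $\Q$ and $iI\in\Q$) gives $[D_0,a]=-\omega_\nu+\omega_\nu^*$, and left multiplication by $e_{33}\in(A_F)_{\C}$ extracts $-\omega_\nu$ while $p_1$ extracts $\omega_\nu^*$; the analogous choice at position $(4,4)$ yields $\omega_e$ and $\omega_e^*$. For $\xi$ and $\eta$ we take $a=i\in\C\subset A_F$: the entries at $(5,2)$ and $(5,6)$ then equal $-2i\Omega\,\xi$ and $-i\Delta\,\eta$, and are separated by the idempotents $p_1,p_2,e_{66}\in(A_F)_{\C}$, using $\Omega,\Delta\neq 0$. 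For the ``$\subseteq$'' direction one checks that each of the sixteen single-entry terms lies in the $(A_F)_{\C}$-bimodule of the matching generator: the bimodule of $e_{31}$ equals $\C e_{31}+\C e_{41}$ (via $M_2(\C)=\Q_{\C}$ acting on the left), the bimodule of $e_{42}$ is $\C e_{32}+\C e_{42}$, the bimodule of $e_{56}$ is $\C e_{56}+\C e_{57}+\C e_{58}$ (using $M_3(\C)$ on the right), and symmetric statements hold for the adjoints; these cover the sixteen left factors.

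The main obstacle is that the $\C$-summand of $A_F$ embeds as $\lambda\mapsto\lambda(e_{11}+e_{55})+\bar\lambda e_{22}$, so rows $1$ and $5$ (and the corresponding columns) cannot be pulled apart by $A_F$ alone. This is harmless in the present setting because no nonzero entry of $[D_0,a]$ occupies position $(1,5)$ or $(5,1)$: the collective projection $p_1$, combined with projections supplied by the $\Q$- and $M_3(\C)$-summands, always isolates the single matrix entry needed.
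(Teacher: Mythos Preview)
Your proof is correct and follows essentially the same strategy as the paper's: both work over the complexification $(A_F)_{\C}$, compute the commutators $[D_0,\,\cdot\,]$ explicitly, and verify the two inclusions. The paper organizes the computation by running through a linear basis $X_{ij},Y,Z_{kl},T$ of $(A_F)_{\C}$ and expressing each $[D_F,\,\text{basis element}\,]$ (or a convenient one-sided multiple thereof) in terms of $\omega_\nu,\omega_e,\xi,\eta$; you instead first catalogue the sixteen matrix positions where any $[D_0,a]$ can be nonzero and then match each position to the bimodule orbit of one of the four generators. The two arguments are equivalent in content, yours being slightly more systematic about the ``$\subseteq$'' direction, while the paper's explicit identities (e.g.\ $-X_{33}[D_F,X_{33}]=\omega_\nu$, $[D_F,Y]Y=\omega_e+\Omega\xi$) make the ``$\supseteq$'' direction very transparent.
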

\begin{proof}
A linear basis of $(A_F)_{\C}$ is given by the elements:
\begin{align*}
\hspace{2cm}
X_{ij}&:=e_{ij}\otimes 1 \qquad\text{with}\;\;i,j=3,4,
&
Y&:=e_{22}\otimes 1 \;,
\\
Z_{kl} &:=e_{kl}\otimes 1 \qquad\text{with}\;\;k,l=6,7,8,
&
T&:=(e_{11}+e_{55})\otimes 1 \;.
\end{align*}
For any projection $p^2=p=p^*$, the commutator $[D_F,p]=[D_F,p^2]=p[D_F,p]+[D_F,p]p$
is a linear combination of $p[D_F,p]$ and its adjoint $-[D_F,p]p$.
Hence $X_{33}[D_F,X_{33}]$ and $X_{44}[D_F,X_{44}]$ can be taken as generators, instead of $[D_F,X_{33}]$
and $[D_F,X_{44}]$. An explicit computation gives:
\begin{align*}
-X_{33}[D_F,X_{33}] &=\omega_\nu \;, &
-X_{44}[D_F,X_{44}] &=\omega_e \;.
\end{align*}
Note that $[D_F,X_{34}]$ is also the adjoint of $[D_F,X_{43}]$, and
$$
[D_F,X_{43}] =(X_{34}\omega_e)^*-X_{43}\omega_\nu
$$
is still generated by $\omega_\nu$, $\omega_e$ and adjoints. Next
$$
[D_F,Y]Y=\omega_e+\Omega\xi \;,\qquad [D_F,Z_{66}]Z_{66}=\Delta\eta \;.
$$
Since $\Omega,\Delta\neq 0$, this proves that $\xi,\eta\in\Omega^1$.

Furthermore
$
[D_F,Z_{6k}]=\Delta\eta Z_{6k}
$ and
$[D_F,Z_{k6}]=-[D_F,Z_{6k}]^*$ are combinations of $\eta$ and $\eta^*$
for all $k=7,8$, and $[D_F,Z_{jk}]=0$ if $j,k=7,8$. Finally
$$
-T[D_F,T]=\omega_\nu^*+\Omega\xi+\Delta\eta \;,
$$
proving that the elements $\omega_\nu,\omega_e,\xi,\eta$ and their adjoints
are a generating family for $\Omega^1$.
\end{proof}

\begin{proof}[Proof of Theorem \ref{thm:1}]
We now prove that: (i) $\Cl(A_F)_{\mathrm{o}}'\supsetneq (A_F^\circ)_{\C}$ (it is
strictly greater), i.e.~the property (M) is not satisfied.
(ii) $\Cl(A_F)_{\mathrm{e}}'=(A_F^\circ)_{\C}$, i.e.~the property (M) with grading is satisfied.

\medskip

$\Cl(A_F)_{\mathrm{o}}'$ is given by the set of elements in Lemma \ref{lemma:1} that commute
with the generators in Lemma \ref{lemma:15}. A tensor $\sum x_{ij}\otimes e_{ij}$, with each $x_{ij}$ as in
\eqref{eq:CF}, commutes with $\eta$ and $\eta^*$ if{}f $q_{12}=q_{21}=0$ and $q_{22}=\delta$.
Hence, the most general $\phi\in A_F'$ commuting with $\eta,\eta^*$ is:
$$
\phi=
e_{11}\otimes a+
e_{22}\otimes b+
(e_{33}+e_{44})\otimes c+
\left(\textstyle{\sum_{i=5}^8}e_{ii}\right)\otimes d
$$
with $a,b,c,d\in M_4(\C)$ arbitrary matrices. Its commutator with $\xi$ and $\xi^*$
vanishes if{}f
\begin{equation}\label{eq:condA}
de_{11}=e_{11}b \;,\qquad
e_{11}d=be_{11} \;.
\end{equation}
Its commutator with $\omega_\nu,\omega_e$ and their adjoints vanishes if{}f:
\begin{equation}\label{eq:condB}
Ea=cE \;,\qquad
aE=Ec \;,\qquad
Fb=cF \;,\qquad
bF=Fc \;,
\end{equation}
where
$$
E:=\left[\!
\begin{array}{c|ccc}
\Upsilon_\nu & \;0\; & \;0\; & \;0\; \\
\hline
\begin{matrix} \;0\; \\ 0 \\ 0 \end{matrix} && \Upsilon_u1_3
\end{array}
\!\right] \;,\qquad\quad
F:=\left[\!
\begin{array}{c|ccc}
\Upsilon_e & \;0\; & \;0\; & \;0\; \\
\hline
\begin{matrix} \;0\; \\ 0 \\ 0 \end{matrix} && \Upsilon_d1_3
\end{array}
\!\right] \;,
$$
are invertible by hypothesis. It follows from \eqref{eq:condB} that $c$
commutes with both $E^2$ and $F^2$. If the hypothesis of Theorem \ref{thm:1}
are satisfied, at least one of the matrices $E^2,F^2$ is not proportional
to the identity. Its commutation with $c$ implies that
$$
c=\left[\!
\begin{array}{c|ccc}
\lambda & \;0\; & \;0\; & \;0\; \\
\hline
\begin{matrix} \;0\; \\ 0 \\ 0 \end{matrix} && m
\end{array}
\!\right] \;,
$$
for some $\lambda\in\C$ and $m\in M_3(\C)$. But then $c$ commutes with $E$ and $F$ as well,
and it follows from \eqref{eq:condB} that $a=E^{-1}cE=c$ and $b=F^{-1}cF=c$.
Now, $b$ commutes with $e_{11}$ as well, and from \eqref{eq:condA} we get
$$
d=\left[\!
\begin{array}{c|ccc}
\lambda & \;0\; & \;0\; & \;0\; \\
\hline
\begin{matrix} \;0\; \\ 0 \\ 0 \end{matrix} && m'
\end{array}
\!\right] \;,
$$
with the same $\lambda$ as before, and with $m'\in M_3(\C)$.
Thus, $\Cl(A_F)_{\mathrm{o}}'$ has elements
\begin{equation}\label{eq:phi}
\phi=
\begin{bmatrix} 1 & 0 \\ \;0\; & \;0\; \end{bmatrix}\otimes
\left[\!
\begin{array}{c|ccc}
\lambda & \;0\; & \;0\; & \;0\; \\
\hline
\begin{matrix} \;0\; \\ 0 \\ 0 \end{matrix} && m
\end{array}
\!\right]+
\begin{bmatrix} \;0\; & \;0\; \\ 0 & 1 \end{bmatrix}\otimes
\left[\!
\begin{array}{c|ccc}
\lambda & \;0\; & \;0\; & \;0\; \\
\hline
\begin{matrix} \;0\; \\ 0 \\ 0 \end{matrix} && m'
\end{array}
\!\right] \;,
\end{equation}
with $\lambda\in\C$ and $m,m'\in M_3(\C)$, and is strictly greater than $(A_F^\circ)_{\C}$.

\medskip

Imposing the extra condition $[\phi,\gamma_F]=0$, we reduce one $M_3(\C)$ to $\C\oplus M_2(\C)$.
Indeed $[\phi,\gamma_F]=0$ if{}f $d$ commutes with the matrix
$$
\begin{bmatrix} 1_2 \\ & \!\!-1_2 \end{bmatrix} \;,
$$
i.e.~$m'$ belongs to $\C\oplus M_2(\C)\subset M_3(\C)$. This proves that
$\Cl(A_F)_{\mathrm{e}}'=(A_F^\circ)_{\C}$.
\end{proof}

\subsection{Morita without the grading}\label{sec:Mgamma}
Let $D_0$ be as in \S\ref{sec:CC} and $\widetilde{D}_F=\widetilde{D}_0+J_F\widetilde{D}_0J_F+D_R$,
with
$$
\widetilde{D}_0:=D_0+\Gamma(e_{57}+e_{75})\otimes e_{22} \;.
$$
Note that this is still of the type described in \S\ref{sec:Ngamma}. 
Here we have three additional parameters with respect to \cite{CC97,Con06,CCM06,CM08}:
$\Omega\in\C$ and $\Delta\in \R$ as in \S\ref{sec:CC}, and the new one $\Gamma\in\R$.

\begin{thm}\label{thm:2}
Let $\Upsilon_\nu,\Upsilon_e,\Upsilon_u,\Upsilon_d,\Omega,\Delta,\Gamma$ be all different from zero,
and at least one of the following two conditions satisfied:
$$
1.\qquad \Upsilon_\nu\neq \pm\Upsilon_u \;,\hspace{4cm}
2.\qquad \Upsilon_e\neq \pm\Upsilon_d \;.
$$
Then $(A_F,H_F,\widetilde{D}_F,J_F)$ satisfies the property (M).
\end{thm}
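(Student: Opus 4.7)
My plan is to mimic the strategy of Theorem \ref{thm:1}, but to exploit the extra $\Gamma$-term to manufacture one new generator in $\Omega^1$ whose commutant closes up the gap that the grading was closing in Theorem \ref{thm:1}. Concretely, property (M) amounts (by the finite-dimensional bicommutant theorem) to the identity $\Cl(A_F)_{\mathrm{o}}'=(A_F^\circ)_{\C}$; since $\widetilde D_F$ again has the form required by Prop.~\ref{prop:9} (one checks that $\Gamma(e_{57}+e_{75})\otimes e_{22}$ sits in $(A_F^\circ)'$ via Lemma \ref{lemma:2}), one inclusion $\supseteq$ is automatic from the 0th and 1st order conditions, and I only need to establish $\subseteq$.

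The first step is to show that the four generators $\omega_\nu,\omega_e,\xi,\eta$ of Lemma \ref{lemma:15} still belong to $\Omega^1(\widetilde D_F)$. This is essentially free: $e_{57}$ and $e_{75}$ commute with each of $X_{33},X_{44},Y,Z_{66}$ (they touch only rows/columns $5$ and $7$), so the $\Gamma$ contribution drops out of the commutators with those generators, and the computations of Lemma \ref{lemma:15} go through verbatim. The new input is to take $Z_{77}=e_{77}\otimes 1\in A_F$: since $D_0$ has no entries in the $(7,\cdot)$ or $(\cdot,7)$ positions of the lower-right block, $[\widetilde D_F,Z_{77}]$ reduces to $\Gamma(e_{57}-e_{75})\otimes e_{22}$, and then
$$
\zeta:=e_{57}\otimes e_{22}=\Gamma^{-1}\,[\widetilde D_F,Z_{77}]\,Z_{77}\in\Omega^1
$$
(with $\zeta^*\in\Omega^1$ obtained symmetrically). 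Here $\Gamma\neq 0$ is crucial.

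Now I can re-run, word for word, the algebraic analysis in the proof of Theorem \ref{thm:1}. Using only $\omega_\nu,\omega_e,\xi,\eta$ and the two hypotheses $\Upsilon_\nu\neq\pm\Upsilon_u$ or $\Upsilon_e\neq\pm\Upsilon_d$, one concludes that a general $\phi\in\Cl(A_F)_{\mathrm{o}}'$ must take the form \eqref{eq:phi}, with $\lambda\in\C$, $m\in M_3(\C)$, and $m'\in M_3(\C)$. To finish I impose the additional condition $[\phi,\zeta]=0$. A direct computation, using that $\zeta$ lives entirely in the lower-right $4\times 4$ block of the first tensor factor, yields
$$
[\phi,\zeta]=e_{57}\otimes\bigl[\mathrm{diag}(\lambda,m'),\,e_{22}\bigr],
$$
and $\mathrm{diag}(\lambda,m')$ commutes with $e_{22}\in M_4(\C)$ if and only if $m'$ is block-diagonal of shape $\C\oplus M_2(\C)\subset M_3(\C)$. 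This is exactly the reduction that the grading effected in Theorem \ref{thm:1}, so $\phi$ now ranges precisely over $(A_F^\circ)_{\C}$.

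The main obstacle is not algebraic but structural: one has to identify the right new element of $A_F$ whose $\widetilde D_F$-commutator isolates the $\Gamma$-edge from all other contributions, and to notice that the resulting generator $\zeta=e_{57}\otimes e_{22}$ pins down the block structure of $m'$ in exactly the same $\C\oplus M_2(\C)\hookrightarrow M_3(\C)$ pattern as the non-standard grading \eqref{eq:Ngamma} did. Once $Z_{77}$ and $\zeta$ are in hand, everything else is a routine application of Theorem \ref{thm:1}. As a sanity check, this is consistent with Remark \ref{rem:12} and the corollary to Theorem \ref{thm:1}: the $\Gamma$-term does enlarge $\Cl(A_F)_{\mathrm{o}}$ relative to the Chamseddine--Connes case, but only just enough to make $\gamma_F$-redundant for property (M).
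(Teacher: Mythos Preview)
Your proposal is correct and follows essentially the same route as the paper: you identify the new generator $\zeta=e_{57}\otimes e_{22}=\Gamma^{-1}[\widetilde D_F,Z_{77}]Z_{77}$ (this is the content of the paper's Lemma~\ref{lemma:16}), then re-use the commutant computation from Theorem~\ref{thm:1} and impose $[\phi,\zeta]=0$ to force $m'\in\C\oplus M_2(\C)$, exactly as the paper does. The only cosmetic difference is that the paper states Lemma~\ref{lemma:16} as a full generating set for $\Omega^1$, whereas you (correctly) observe that for the inclusion $\Cl(A_F)_{\mathrm{o}}'\subseteq (A_F^\circ)_{\C}$ it suffices that $\omega_\nu,\omega_e,\xi,\eta,\zeta$ merely lie in $\Omega^1$.
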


\begin{lemma}\label{lemma:16}
If $\Omega,\Delta,\Gamma\neq 0$,
the $A_F$-bimodule $\Omega^1$ is generated by the elements in Lemma~\ref{lemma:15}
plus the element
$$
\zeta=e_{57}\otimes e_{22}
$$
and its adjoint.
\end{lemma}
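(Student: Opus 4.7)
The plan is to mimic the proof of Lemma~\ref{lemma:15}, computing commutators of $\widetilde{D}_F$ with a basis of $(A_F)_{\C}$ and tracking the extra contributions produced by the perturbation $\Gamma(e_{57}+e_{75})\otimes e_{22}$ in $\widetilde{D}_0$. Since $J_F\widetilde{D}_0J_F$ and $D_R$ both lie in $A_F'$, for every $a\in A_F$ one has
$$
[\widetilde{D}_F,a]=[D_F,a]+\Gamma\bigl[(e_{57}+e_{75})\otimes e_{22},\,a\bigr],
$$
so only the second summand is genuinely new.

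Using the basis $\{X_{ij},Y,Z_{kl},T\}$ of Lemma~\ref{lemma:15}, the extra commutator vanishes whenever $a$ involves no index from $\{5,7\}$ in the first tensor factor; this covers $X_{ij}$, $Y$, and all $Z_{kl}$ with $k,l\in\{6,8\}$. For these basis elements the derivation of Lemma~\ref{lemma:15} applies verbatim and produces $\omega_\nu,\omega_e,\xi,\eta$ (and adjoints) in $\Omega^1$. The crucial new case is $b=Z_{77}$: since $[D_F,Z_{77}]=0$ and $[e_{57}+e_{75},e_{77}]=e_{57}-e_{75}$, we obtain
$$
[\widetilde{D}_F,Z_{77}]=\Gamma(e_{57}-e_{75})\otimes e_{22}\in\Omega^1.
$$
Left-multiplying by $T=(e_{11}+e_{55})\otimes 1\in A_F$ kills the $e_{75}$ piece, yielding $T\cdot[\widetilde{D}_F,Z_{77}]=\Gamma\zeta$, so $\zeta\in\Omega^1$ (as $\Gamma\neq 0$); then $\zeta^*\in\Omega^1$ by $*$-closure of $\Omega^1$.

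It remains to verify that the remaining extra commutators—those coming from $Z_{67},Z_{76},Z_{78},Z_{87}$ and $T$—do not force any further generator. Each such contribution is of the form $e_{pq}\otimes e_{22}$ with one index equal to $5$ and the other in $\{6,7,8\}$, and all such matrix units are reached from $\zeta$ or $\zeta^*$ by $A_F$-bimodule operations, e.g.\ $\zeta\cdot Z_{7q}=e_{5q}\otimes e_{22}$ and $Z_{p7}\cdot\zeta^*=e_{p5}\otimes e_{22}$. The main (mild) obstacle is precisely this bookkeeping step: enumerating the finitely many nonzero extra commutators and confirming that each reduces to the $A_F$-bimodule generated by $\zeta,\zeta^*$. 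Because the perturbation only touches indices $\{5,7\}$ in the first tensor factor and the fixed direction $e_{22}$ in the second, while $A_F$ acts through the full $M_3(\C)$ on indices $\{6,7,8\}$, this verification is routine and produces no additional generators.
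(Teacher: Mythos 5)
Your argument is correct and follows essentially the same route as the paper: one reruns the computation of Lemma \ref{lemma:15} on the same basis of $(A_F)_{\C}$, observes that the $\Gamma$-perturbation only affects the commutators with $T$ and the $Z_{kl}$ involving an index in $\{5,7\}$, and extracts $\zeta$ from $[\widetilde{D}_F,Z_{77}]$ (the paper uses $[\widetilde{D}_F,Z_{77}]Z_{77}=\Gamma\zeta$ where you left-multiply by $T$, an immaterial difference). Your explicit reduction of the remaining extra terms to $\zeta Z_{7q}$ and $Z_{p7}\zeta^*$ is in fact slightly more complete than the paper's own one-line verification.
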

\begin{proof}
Repeating the proof of Lemma \ref{lemma:15}, the only change is:
$$
[D_F,Z_{77}]Z_{77}=\Gamma \zeta \;,\qquad
[D_F,Z_{78}]=\Gamma \zeta Z_{78} \;,\qquad
-T[D_F,T]=\omega_\nu^*+\Omega\xi+\Delta\eta+\Gamma\zeta \;,
$$
and $[D_F,Z_{87}]=-[D_F,Z_{78}]^*$.
\end{proof}

\begin{proof}[Proof of Theorem \ref{thm:2}]
$\Cl(A_F)_{\mathrm{o}}'$ now is the set of elements
$\phi$ in \eqref{eq:phi} which in addition commute with $\zeta$.
But $[\phi,\zeta]=0$ if{}f $m'\in\C\oplus M_2(\C)\subset M_3(\C)$, so
$\Cl(A_F)_{\mathrm{o}}'=(A_F^\circ)_{\C}$ and the property (M) holds.
\end{proof}

\section{Some remarks on orientability and irreducibility}
\subsection{Orientability}\label{sec:orient}
A classification of finite-dimensional spectral triples satisfying, among other axioms, the orientability condition is in \cite{Kra97}; in fact, in the classification of Dirac operators such assumption plays a crucial role: for example, the uniqueness of the decomposition in \S3.4 follows immediately from the orientability condition. In our case (in the Standard Model with neutrino mixing), the sum $A_F'+(A_F^\circ)'$ is not direct, the term $D_R$ in \eqref{eq:UpsilonR} being an example of non-trivial element in the intersection. This already suggests that the orientability condition in global dimension zero is not satisfied. In fact, we can say something more.

\begin{prop}
Let either \\[2pt]
\hspace*{5pt}(1) $D_F$ and $\gamma_F$ be as in Theorem \ref{thm:1}, or
\\[1pt]
\hspace*{5pt}(2) $\gamma_F$ be the standard grading in \S\ref{sec:Sgamma} and $D_F$ \underline{an\smash{y}} operator
of the type described in \S\ref{sec:Sgamma}.
\\[2pt]
In both cases, there is no chain $c\in A_F^{\otimes n+1}$ such that $\pi_D(c)=\gamma_F$, for any $n\geq 0$.
\end{prop}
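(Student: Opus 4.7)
The plan is to reduce both cases to the single assertion $\gamma_F \notin \Cl(A_F)_{\mathrm{o}}$, since by construction every element of the form $\pi_D(c) = \sum_i a_0^i[D_F,a_1^i]\cdots[D_F,a_n^i]$ lies in $\Cl(A_F)_{\mathrm{o}}$. In case~(1) this inclusion is precisely the corollary of Theorem~\ref{thm:1}, but I prefer to give a uniform argument that also handles case~(2).

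The key structural observation I will exploit is that, in both cases, the ``off-diagonal'' part $D_0$ of the Dirac operator decomposes along the 4-direction of $H_F$ as $D_0 = M_1\otimes e_{11} + M_2\otimes(1-e_{11})$ with $M_1,M_2\in M_8(\C)$, which one reads off directly from the explicit matrices displayed in \S\ref{sec:Sgamma} (case~(2), where the lower-right $4\times 4$ block of $M_2$ is in fact zero) and in \S\ref{sec:CC} (case~(1)). Because $J_F D_0 J_F \in A_F'$ and $D_R \in A_F'\cap (A_F^\circ)'$, commutators with $A_F$ collapse to $[D_F,a] = [M_1,a]\otimes e_{11} + [M_2,a]\otimes(1-e_{11})$, and of course $a\otimes 1 = a\otimes e_{11}+a\otimes(1-e_{11})$ has the same shape. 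Hence both $A_F$ and $[D_F,A_F]$ sit inside
$$
V \;:=\; \bigl\{X\otimes e_{11} + Y\otimes(1-e_{11}) : X,Y\in M_8(\C)\bigr\} .
$$
Since $e_{11}(1-e_{11})=0$, $V$ is closed under multiplication, and since $e_{11}$ and $1-e_{11}$ are self-adjoint, $V$ is closed under adjoints; consequently $\Cl(A_F)_{\mathrm{o}}\subseteq V$.

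It now suffices to verify $\gamma_F\notin V$ in both cases. Decomposing $\gamma_F=\sum_{j=1}^4 \gamma_{jj}\otimes e_{jj}$ according to the action of $\gamma_F$ on the four columns of the 4-direction, a direct computation from \eqref{eq:Sgamma} resp.~\eqref{eq:Ngamma} shows $\gamma_{22}\ne\gamma_{33}$: they disagree in the lower $4\times 4$ block of $M_8(\C)$ by a relative sign inherited from the second summand of $\gamma_F$. Any element of $V$ acts identically on columns~2, 3, 4 (the same $Y$ appears in all three), so $\gamma_F\in V$ would force $\gamma_{22}=\gamma_{33}$; the failure of this equality yields $\gamma_F\notin V\supseteq\Cl(A_F)_{\mathrm{o}}$, and therefore no chain $c$ can satisfy $\pi_D(c)=\gamma_F$. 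The only substantive obstacle in the plan is reading off the direct-sum structure of $D_0$ and the discrepancy $\gamma_{22}\ne\gamma_{33}$ from the explicit matrices in both cases — essentially matrix bookkeeping, with no real conceptual difficulty.
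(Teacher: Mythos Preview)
Your argument is correct, and it takes a genuinely different route from the paper's.

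The paper disposes of case~(1) simply by invoking the corollary of Theorem~\ref{thm:1}, which in turn rests on the full computation of $\Cl(A_F)_{\mathrm{o}}'$ carried out there. For case~(2) the paper instead exhibits a single witness $X=e_{55}\otimes e_{23}$ that commutes with $A_F$ and with $D_0$ (hence with all of $\Cl(A_F)_{\mathrm{o}}$) but anticommutes with the standard $\gamma_F$. Your approach replaces both arguments by one structural observation: since $D_0$ and $A_F$ lie in the subalgebra $V=\{X\otimes e_{11}+Y\otimes(1-e_{11})\}\subset\mathrm{End}_{\C}(H_F)$, so does $\Cl(A_F)_{\mathrm{o}}$, and $\gamma_F\notin V$ because its action distinguishes columns $2$ and $3$. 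This is more uniform and entirely self-contained --- in particular it avoids appealing to the rather heavy machinery of Theorem~\ref{thm:1} for case~(1). Conversely, the paper's witness $e_{55}\otimes e_{23}$ is sharper for case~(2): it lies in $A_F'$ itself and anticommutes with $\gamma_F$, giving extra information beyond $\gamma_F\notin V$. Your ``columns $2$ vs.~$3$'' obstruction is essentially the coarser version $1_8\otimes e_{23}\in V'$ of the same idea.
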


\begin{proof}
(1) $\gamma_F\notin\Cl(A_F)_{\mathrm{o}}$, as stated in Theorem \ref{thm:1}.
\\[2pt]
(2) Let $X:=e_{55}\otimes e_{23}$. This operator commutes with $A_F$ and $D_0$, hence
with any element of $\Cl(A_F)_{\mathrm{o}}$.
But it anticommutes with $\gamma_F$, proving that $\gamma_F\notin\Cl(A_F)_{\mathrm{o}}$.
\end{proof}

\noindent
A stronger statement holds for Chamseddine-Connes Dirac operator.

\begin{prop}
Let $\gamma_F$ as in \eqref{eq:Sgamma} or \eqref{eq:Ngamma} and 
$D_F$ as in \S\ref{sec:CC}.
If $\Upsilon_\nu,\Omega,\Delta$ are all zero, then
there is no chain $c\in (A_F\otimes A_F^\circ)\otimes A_F^{\otimes n}$
such that $\pi_D(c)=\gamma_F$, for any $n\geq 0$.
\end{prop}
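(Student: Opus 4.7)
The plan is to exhibit a single operator $X \in \mathrm{End}_\mathbb{C}(H_F)$ that commutes with $A_F$, with $A_F^\circ$, and with every $[D_F, a]$ for $a \in A_F$, while \emph{not} commuting with $\gamma_F$. Because every $\pi_D(c)$ is a sum of products built from these three subsets (and the $0$th/$1$st order conditions make them pairwise commuting in the relevant sense), such an $X$ would commute with the entire image of $\pi_D$, precluding the identity $\pi_D(c) = \gamma_F$ for any chain $c$.

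I begin with two reductions. First, by Proposition~\ref{prop:9}, $D_F - D_0 \in A_F'$, so $[D_F, a] = [D_0, a]$ for $a \in A_F$; the image of $\pi_D$ thus lies in the algebra generated by $A_F$, $A_F^\circ$, and $[D_0, A_F]$. Second, with $\Upsilon_\nu = \Omega = \Delta = 0$, inspection of the two summands of the Chamseddine--Connes $D_0$ in \S\ref{sec:CC} shows that all their nonzero entries lie in the upper-left $4 \times 4$ block of the $M_8$ tensor factor --- rows and columns $5$--$8$ are entirely zero in both summands. Since $A_F$ is block-diagonal in the $M_8$-factor, the commutators $[D_0, a]$ inherit this support property.

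My candidate is $X := e_{15} \otimes e_{11}$. I will check three commutations in turn. (i) Commutation with $A_F$: Lemma~\ref{lemma:1} places $e_{15}$ in $C_F$ (take $q_{12} = 1$ and all other parameters zero), whence $X \in C_F \otimes M_4(\mathbb{C}) = A_F'$. (ii) Commutation with $A_F^\circ$: Lemma~\ref{lemma:2} puts every element of the form $a \otimes e_{11}$ in $(A_F^\circ)'$. (iii) Commutation with $[D_0, a]$ for all $a \in A_F$: writing $[D_0, a] = [P_1, A] \otimes e_{11} + [P_2, A] \otimes (1 - e_{11})$, the $(1 - e_{11})$-component is annihilated by $X$ on either side because $e_{11}(1 - e_{11}) = 0$ in $M_4$, while the surviving $[P_1, A]$ is supported in rows and columns $2$--$4$ of $M_8$, so $e_{15}$ annihilates it on either side.

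What remains is to show $[X, \gamma_F] \neq 0$ for both gradings \eqref{eq:Sgamma} and \eqref{eq:Ngamma}. This is a short direct computation: in each case the $(1,1)$- and $(5,5)$-entries of the $M_8$-factor of $\gamma_F$ carry opposite signs while the $(1,1)$-entry of the $M_4$-factor equals $+1$, yielding $[X, \gamma_F] = -2\, e_{15} \otimes e_{11}$. The main conceptual obstacle is precisely the choice of $X$: one must find a nontrivial element of $A_F' \cap (A_F^\circ)' \cap \Omega^1(A_F)'$ that detects $\gamma_F$, and the whole argument hinges on the vanishing of $\Upsilon_\nu, \Omega, \Delta$ freeing up row and column $5$ of $M_8$ from the support of $[D_0, A_F]$ --- without this collapse of $D_0$, the candidate $e_{15} \otimes e_{11}$ would not land in the required commutant.
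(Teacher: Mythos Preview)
Your proof is correct and follows essentially the same route as the paper's: both use the witness $X=e_{15}\otimes e_{11}$, verify that it lies in $A_F'\cap(A_F^\circ)'$ and in the commutant of the image of $\pi_D$, and then observe that it anticommutes with $\gamma_F$. The only cosmetic difference is that the paper checks $[X,D_0]=0$ directly (which immediately gives $[X,[D_0,a]]=0$ for all $a$), whereas you argue at the level of the commutators $[D_0,a]$ via their support; both are valid, the paper's version being marginally shorter.
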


\begin{proof}
The element $X=e_{15}\otimes e_{11}$ commutes with $A_F$, $A_F^\circ$ and $D_0$, hence with any element in the image of the map $\pi_D$, but it anticommutes with $\gamma_F$, hence $\gamma_F\notin\mathrm{Im}(\pi_D)$.
\end{proof}

\noindent
For $0$-chains, it follows from the argument in \cite{Kra97} that, no matter which Dirac operator one chooses, the orientability conditions cannot be satisfied.

\begin{prop}
For $\gamma_F$ as in \eqref{eq:Sgamma} or \eqref{eq:Ngamma} there is no $c\in A_F\otimes A_F^\circ$ such that $\pi_D(c)=\gamma_F$.
\end{prop}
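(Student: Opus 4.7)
The plan is to rule out $\gamma_F\in\pi_D(A_F\otimes A_F^\circ)$ by producing a single witness $Y\in A_F'\cap(A_F^\circ)'$ that fails to commute with $\gamma_F$. This would suffice: applied to a $0$-chain $c=\sum_i a_i\otimes b_i^\circ$, the map $\pi_D$ returns $\sum_i a_i b_i^\circ$, so the image $\pi_D(A_F\otimes A_F^\circ)$ is contained in the linear span $A_F\cdot A_F^\circ\subset\mathrm{End}_{\C}(H_F)$, and every element of this span commutes with the joint commutant $A_F'\cap(A_F^\circ)'$.

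For the witness I would take $Y:=e_{15}\otimes e_{11}$, motivated by the fact that the ``extra'' piece $D_R$ of \eqref{eq:UpsilonR} already lives in $A_F'\cap(A_F^\circ)'$. To confirm $Y\in A_F'$ I would invoke Lemma~\ref{lemma:1}: setting $q_{12}=1$ and the remaining coefficients in \eqref{eq:CF} to zero displays $e_{15}$ as an element of $C_F$, so $Y\in C_F\otimes M_4(\C)=A_F'$. To confirm $Y\in(A_F^\circ)'$ I would invoke Lemma~\ref{lemma:2}: setting $a=e_{15}$ and $b=c=d=0$ in \eqref{eq:lemma2} writes $Y$ in the required form. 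Clearly $Y\neq 0$.

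The remaining step is to check the anticommutation of $Y$ with $\gamma_F$ uniformly in the two gradings. In each of \eqref{eq:Sgamma} and \eqref{eq:Ngamma} the first tensor factor of $\gamma_F$ is diagonal, with the relevant entries at positions $(1,1)$ and $(5,5)$ equal to $(+1,0)$ on the first summand and $(0,-1)$ on the second; moreover, the $(1,1)$-entry of the second tensor factor equals $+1$ in both summands of each grading. Using the fact that for a diagonal $M\in M_8(\C)$ one has $Me_{15}=M_{11}e_{15}$ and $e_{15}M=M_{55}e_{15}$, a direct computation then yields
$$
\gamma_F Y=e_{15}\otimes e_{11}=Y,\qquad Y\gamma_F=-e_{15}\otimes e_{11}=-Y,
$$
so $\{Y,\gamma_F\}=0$ and $[\gamma_F,Y]=2Y\neq 0$, contradicting the hypothetical identity $\gamma_F=\pi_D(c)$. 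The only non-routine step is this anticommutation computation, and I expect no real obstacle: the argument reduces to inspecting four specific diagonal entries in each of the two gradings.
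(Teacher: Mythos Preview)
Your proposal is correct and follows essentially the same approach as the paper: the paper's proof likewise exhibits the operator $X:=e_{15}\otimes e_{11}\in A_F'\cap(A_F^\circ)'$ and observes it anticommutes with $\gamma_F$, hence $\gamma_F$ cannot lie in the algebra generated by $A_F$ and $A_F^\circ$. Your write-up simply makes explicit the verifications via Lemmas~\ref{lemma:1} and~\ref{lemma:2} and the diagonal-entry computation for the anticommutation, which the paper leaves implicit.
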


\begin{proof}
The operator $X:=e_{15}\otimes e_{11}$ belongs to $A_F'\cap (A_F^\circ)'$, but anticommutes with $\gamma_F$. Hence $\gamma_F$ is not in the algebra generated by $A_F$ and $A_F^\circ$.
\end{proof}

For the spectral triple of Theorem \ref{thm:2}, on the other hand, 
since $\gamma_F\in (A_F^\circ)'$ and $(A_F^\circ)'=\Cl(A_F)_{\mathrm{o}}$
due to the property (M), it immediately follows that $\gamma_F\in\Cl(A_F)_{\mathrm{o}}$.

\begin{prop}\label{prop:chain}
Let the spectral triple be as in Theorem \ref{thm:2}. Then there is a $c\in\bigoplus_{n\geq 0}A_F^{\otimes n+1}$
such that $\pi_D(c)=\gamma_F$.
\end{prop}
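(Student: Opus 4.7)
The proof is essentially a direct consequence of Theorem \ref{thm:2} together with the surjectivity of $\pi_D$ onto $\Cl(A_F)_{\mathrm{o}}$, so the plan is short. First I would invoke Theorem \ref{thm:2}: the property (M) holds, which by definition means that $H_F$ is a Morita equivalence $(A_F^\circ)_{\C}$-$\Cl(A_F)_{\mathrm{o}}$ bimodule. By the von Neumann bicommutant theorem in finite dimensions, applied to the mutually commuting unital $*$-subalgebras $\Cl(A_F)_{\mathrm{o}}$ and $(A_F^\circ)_{\C}$ of $\mathrm{End}_{\C}(H_F)$, this is equivalent to the identity $\Cl(A_F)_{\mathrm{o}}=\bigl((A_F^\circ)_{\C}\bigr)'=(A_F^\circ)'$.

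Next I would verify that $\gamma_F\in (A_F^\circ)'$. On the one hand, the spectral triple being even forces $[\gamma_F,\pi(a)]=0$ for all $a\in A_F$. On the other hand, with $\gamma_F$ as in \S\ref{sec:Ngamma} one has $J_F\gamma_F=\pm\gamma_F J_F$, so that $\gamma_F$ also commutes with every element of $A_F^\circ=J_FA_FJ_F$. Combining this with the identification of the previous step yields $\gamma_F\in\Cl(A_F)_{\mathrm{o}}$.

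Finally, I would use the surjectivity of the linear map
$$
\pi_D:\bigoplus\nolimits_{n\geq 0}A_F^{\otimes n+1}\to\Cl(A_F)_{\mathrm{o}},
$$
noted right after its definition in \S\ref{sec:setup}: by construction $\Cl(A_F)_{\mathrm{o}}$ is generated as a complex $*$-algebra by $A_F$ and $\Omega^1(A_F)=\mathrm{Span}\{a[D,b]\}$, whose elements are images of $2$-tensors under $\pi_D$. Picking any preimage of $\gamma_F$ under this surjection produces the required chain $c$.

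There is no real obstacle here, since the content of the statement is ``only'' existence of a preimage; the conceptual work has already been done in Theorem \ref{thm:2}. If one wanted an explicit $c$ rather than an abstract one, the main task would be combinatorial: decompose $\gamma_F$ as a noncommutative polynomial in the generators $\omega_\nu,\omega_e,\xi,\eta,\zeta$ of Lemma \ref{lemma:16} and their adjoints (together with $A_F$-coefficients), and then read off the corresponding chain via the formula for $\pi_D$. This is straightforward but tedious, and unnecessary for the statement as formulated.
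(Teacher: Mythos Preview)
Your proposal is correct and follows essentially the same argument the paper gives just before the proposition: property (M) from Theorem~\ref{thm:2} yields $\Cl(A_F)_{\mathrm{o}}=(A_F^\circ)'$, one checks $\gamma_F\in(A_F^\circ)'$, and surjectivity of $\pi_D$ onto $\Cl(A_F)_{\mathrm{o}}$ finishes. One tiny wording quibble: the spectral triple in Theorem~\ref{thm:2} is stated without the grading, so rather than saying ``the spectral triple being even forces $[\gamma_F,\pi(a)]=0$'' you should simply note that the operator $\gamma_F$ of \S\ref{sec:Ngamma} commutes with $A_F$ by construction.
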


Of course, this gives no clue on whether $c$ in previous proposition is an simple tensor (so, a chain) or possibly a cycle.


\subsection{Irreducibility}\label{sec:8}

We say that a real spectral triple $(A,H,D,J)$ is \emph{irreducible} if there is no
proper subspace of $H$, other than $\{0\}$, which carries a subrepresentation of $A$
and is stable under $D,J$ and (in the even case) $\gamma$.
Equivalently, it is irreducible if there is no non-trivial projection $p=p^*=p^2\in\mathrm{End}_{\C}(H)$
(so, other than $0$ and $1$), which commute with $A,D,J$ and $\gamma$ in the even case
\cite[Def.~11.2]{GVF01}.

Let $D_F$ be the operator in \S\ref{sec:CC}, and $\gamma_F$ one of the gradings in \eqref{eq:Sgamma}
or \eqref{eq:Ngamma}. If $\Delta=0$ (and possibly $\Omega\neq 0$), then $(A_F,H_F,D_F,\gamma_F,J_F)$
is clearly reducible. Take:
$$
p=\left(\textstyle{\sum_{i=1}^4}e_{ii}\right)\otimes e_{11}+e_{55}\otimes 1
$$
the operator projecting on the subspace of $H_F$ containing only leptons. It clearly commutes with $A_F,D_F,\gamma_F$ and $J_F$.

In order to have irreducibility, we need in $D_F$ a term mixing leptons and quarks.

\begin{prop}
The even spectral triple of Theorem \ref{thm:1} and
the odd spectral triple of Theorem \ref{thm:2} are both irreducible.
\end{prop}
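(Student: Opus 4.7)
The plan is to derive irreducibility as a direct consequence of the Morita statements in Theorems~\ref{thm:1} and~\ref{thm:2}, combined with a short computation of $(A_F)_{\C} \cap (A_F^\circ)_{\C}$.

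Let $p = p^* = p^2 \in \mathrm{End}_{\C}(H_F)$ commute with $A_F$, the Dirac operator (either $D_F$ in the even case or $\widetilde{D}_F$ in the odd case), $J_F$, and, in the even case, also with $\gamma_F$. Commutation of $p$ with $A_F$ and with the Dirac operator forces commutation with every $a[D,b]$ ($a,b\in A_F$), hence with the whole algebra $\Cl(A_F)_{\mathrm{o}}$; in the even case, commutation with $\gamma_F$ further gives $p \in \Cl(A_F)_{\mathrm{e}}'$. By Theorem~\ref{thm:2} (odd case) or Theorem~\ref{thm:1} (even case), this commutant equals $(A_F^\circ)_{\C}$, so $p \in (A_F^\circ)_{\C}$.

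Next, $[p,J_F]=0$ and $J_F^2=1$ (cf.~\eqref{eq:JF}) yield $p=J_FpJ_F$. The antilinear involution $x\mapsto J_FxJ_F$ of $\mathrm{End}_{\C}(H_F)$ sends $(A_F^\circ)_{\C}$ bijectively onto $(A_F)_{\C}$: this follows from the isomorphism $a\mapsto a^\circ=J_F\bar aJ_F$ recalled in \S\ref{sec:thedata}, together with the stability of $A_F$ under entrywise complex conjugation of matrices (which one checks directly on the three summands $\C$, $\Q$, $M_3(\C)$). Hence $p \in (A_F)_{\C} \cap (A_F^\circ)_{\C}$.

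The final step is to show that this intersection equals $\C\cdot\id_{H_F}$. Writing a generic element of $(A_F^\circ)_{\C}$ in the form $P_1\otimes X_1+P_2\otimes X_2$ with $P_1=\mathrm{diag}(1_4,0_4)$, $P_2=\mathrm{diag}(0_4,1_4)$ and $X_1,X_2\in M_4(\C)$ of the block shape prescribed in \S\ref{sec:thedata}, and a generic element of $(A_F)_{\C}$ as $a\otimes 1_4$, equating the two forces $X_1=c_1\,1_4$ and $X_2=c_2\,1_4$ to be scalars; the requirement that the resulting $a=\mathrm{diag}(c_1\,1_4,c_2\,1_4)$ fit the block form \eqref{eq:8t8} imposes in addition $c_1=c_2$, because the $\C$-summand of $A_F$ ties together the entries at positions $(1,1)$ and $(5,5)$ of the $M_8$ factor. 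Thus $p$ is a multiple of $\id_{H_F}$, and being a projection it equals $0$ or $\id_{H_F}$.

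I expect the main technical subtlety to be this last intersection computation. The complexifications $(A_F)_{\C}$ and $(A_F^\circ)_{\C}$ allow $\lambda$ and $\bar\lambda$ to decouple into independent complex parameters, but the single shared $\lambda$ linking the $\C$- and $M_3(\C)$-summands of $\pi(A_F)$ survives complexification; it is precisely this residual constraint that forces $c_1=c_2$ and rules out any non-trivial invariant subspace.
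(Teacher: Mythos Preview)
Your argument is correct and shares its core idea with the paper's proof: both use the property~(M) from Theorems~\ref{thm:1}--\ref{thm:2} to place $p$ inside $(A_F^\circ)_{\C}$, and then exploit $J_F$-invariance to cut this down to scalars. The final reduction is organized differently, however. The paper observes that commutation with $A_F$ and $J_F$ forces $p$ to commute with $A_F^\circ=J_FA_FJ_F$ as well, hence $p$ lies in the \emph{center} of $(A_F^\circ)_{\C}$; it then writes out that center (four scalar parameters) and imposes $p=J_FpJ_F$ explicitly. You instead use $p=J_FpJ_F$ once, directly, to send $p$ from $(A_F^\circ)_{\C}$ into $(A_F)_{\C}$ via the antilinear involution $x\mapsto J_FxJ_F$, and then compute the intersection $(A_F)_{\C}\cap (A_F^\circ)_{\C}=\C\cdot\id_{H_F}$.

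Your route is marginally more economical: it bypasses the description of the center and uses $J_F$-invariance only once, whereas the paper effectively uses it twice (once to get commutation with $A_F^\circ$, and again in the explicit $J_FpJ_F$ computation). One small remark: the justification that $J_F(A_F^\circ)_{\C}J_F=(A_F)_{\C}$ can be stated more directly than you do --- since $A_F^\circ=J_FA_FJ_F$ and $J_F^2=1$, one has $J_FA_F^\circ J_F=A_F$, and conjugation by the antilinear $J_F$ preserves complex spans; the detour through entrywise conjugation is not needed.
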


\begin{proof}
For the even spectral triple of Theorem \ref{thm:1}, if $p$ is a projection commuting with $A_F,D_F,\gamma_F,J_F$.
Then it belongs to $\Cl(A_F)_{\mathrm{e}}'=A_F^\circ$ (property (M)). Similarly, for the odd spectral triple of Theorem \ref{thm:2},
one proves that $p$ must belong to $\Cl(A_F)_{\mathrm{o}}'=(A_F^\circ)_{\C}$. But it also commutes with $A_F^\circ=J_FA_FJ_F$,
so it belongs to the center of $(A_F^\circ)_{\C}$. Hence:
$$
p=
\begin{bmatrix} 1 & 0 \\ \;0\; & \;0\; \end{bmatrix}\otimes
\left[\!
\begin{array}{c|ccc}
\lambda & \;0\; & \;0\; & \;0\; \\
\hline
\begin{matrix} \;0\; \\ 0 \\ 0 \end{matrix} && \beta 1_3
\end{array}
\!\right]
+\begin{bmatrix} 0 & \;0\; \\ \;0\; & 1 \end{bmatrix}\otimes\left[\!
\begin{array}{c|c}
\begin{matrix} \;\lambda\; & \;0\;  \\ 0 & \lambda' \end{matrix}  &
\begin{matrix} \;0\; & \;0\;  \\ 0 & 0 \end{matrix} \\
\hline
\begin{matrix} \;0\; & \;0\; \\ 0 & 0 \end{matrix} & \delta 1_2
\end{array}
\!\right] \;,
$$
with $\lambda,\lambda',\beta,\delta\in\C$. Since
$$
J_FpJ_F=\big\{\lambda(e_{11}+e_{55})+\lambda'e_{22}+\delta(e_{33}+e_{44})+\beta (e_{55}+e_{66}+e_{77})\big\}\otimes 1\;,
$$
if $p$ commutes with $J_F$ it is proportional to the identity, hence $p=0$ or $p=1$.
\end{proof}

Let us mention that other inequivalent definitions of irreducibility can be used.
For example, the one adopted in \S18.3 of \cite{CM08} says that a real spectral triple
is irreducible if $H$  carries an irreducible representation of $A$ and $J$. Such a condition
is stronger than the one used by us, and is the condition leading to
the algebra $M_2(\Q)+M_4(\C)$. This is later reduced to $A_F$ (which allows to introduce a grading
and leads to the original Dirac operator of \cite{CC97,Con06,CCM06,CM08}), thus loosing the irreducibility
property. In the next section, we discuss the intermediate algebra $A^{\mathrm{ev}}$ of the Pati-Salam model.

\subsection{On the Pati-Salam model}\label{sec:9}
The Pati-Salam model is a grand unified theory with gauge group $\mathrm{Spin}(4)\times\mathrm{Spin}(6)\simeq SU(2)\times SU(2)\times SU(4)$.
The relevant algebra is now $A^{\mathrm{ev}}=\Q\oplus\Q\oplus M_4(\C)$, which we identify with 
the subalgebra of elements $a\otimes 1\in\mathrm{End}_{\C}(H_F)$, with $a$ of the form:
$$
a=\left[\!
\begin{array}{c|c}
\!\left[\begin{array}{c|c} x \\ \hline & y \end{array}\right]\!
&
\phantom{\!\left[\begin{array}{c|c} x \\ \hline & y \end{array}\right]\!}
\\[14pt]
\hline
\begin{matrix} ~ \\[7pt] ~ \end{matrix} & m
\end{array}\!\right] \;,
$$
with $x,y\in\Q$ (and we think of them as $2\times 2$ complex matrices) and $m\in M_4(\C)$.
All off-diagonal blocks are zero.

The data $(A^{\mathrm{ev}},H_F,J_F,\gamma_F,D_F)$, with $D_F$ as in \S\ref{sec:CC}
and $\gamma_F$ as in \eqref{eq:Sgamma}, satisfies all the conditions of a real spectral triple
except for the 1st order condition \cite{CCvS13a,CCvS13b} (and then the property (M) cannot be satisfied).
On the other hand, it is a simple check to verify that irreducibility, in the stronger sense of \S18.3 of \cite{CM08}
(so, without $\gamma_F$ and $D_F$) is satisfied.

\begin{lemma}
The commutant $(A^{\mathrm{ev}})'$ has elements $\sum a\otimes b$ with $b\in M_4(\C)$ arbitrary and
$a\in M_8(\C)$ of the form
\addtolength{\arraycolsep}{3pt}
$$
a=\left[\!
\begin{array}{cccc}
\multicolumn{1}{c|}{\rule{0pt}{18pt}\alpha 1_2\!} \\[5pt]
\cline{1-2}
& \multicolumn{1}{|c|}{\rule{0pt}{18pt}\beta 1_2\!} && \\[5pt]
\cline{2-4}
\begin{matrix} ~ \\[5pt] ~ \end{matrix} && \multicolumn{2}{|c}{\hspace{8pt}\delta 1_4\hspace{5pt}}
\end{array}\!\right] \;,
$$
where the $\alpha$ and $\beta$-blocks are $2\times 2$, the $\delta$-block is $4\times 4$, and
$\alpha,\beta,\delta\in\C$.
\end{lemma}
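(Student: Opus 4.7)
The plan is to reduce the computation to a block-matrix calculation in $M_8(\C)$. Every element of $A^{\mathrm{ev}}$ has the form $a\otimes 1$ with $a$ in the real $*$-subalgebra $B\subset M_8(\C)$ of block-diagonal matrices $\mathrm{diag}(x,y,m)$, $x,y\in\Q$, $m\in M_4(\C)$. Since $\mathrm{End}_{\C}(H_F)\simeq M_8(\C)\otimes M_4(\C)$, with $M_8(\C)$ acting by left multiplication and $M_4(\C)$ by right multiplication, and since left and right multiplications are each other's commutant, one has $(A^{\mathrm{ev}})'=B'\otimes M_4(\C)$. The problem therefore reduces to determining the commutant $B'$ of $B$ inside $M_8(\C)$.

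Next I would write an arbitrary $a'\in M_8(\C)$ in $3\times 3$ block form with block sizes $(2,2,4)$ and impose $a'a=aa'$ for all $a=\mathrm{diag}(x,y,m)\in B$. Because $a$ is block-diagonal, this splits into the conditions $a_{ii}\,a'_{ij}=a'_{ij}\,a_{jj}$ for each pair of indices $(i,j)$.

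The three diagonal conditions are intertwining problems: $a'_{11}$ must commute with every $x\in\Q$, $a'_{22}$ with every $y\in\Q$, and $a'_{33}$ with every $m\in M_4(\C)$. For the third block, Schur's lemma immediately gives $a'_{33}=\delta\,1_4$ with $\delta\in\C$. For the first two blocks the only slightly subtle point is the following: although $\Q\subset M_2(\C)$ is a \emph{real} subalgebra, its complex linear span fills all of $M_2(\C)$ (equivalently, $\Q$ contains $1_2,i\sigma_1,i\sigma_2,i\sigma_3$), so its commutant inside $M_2(\C)$ equals the center of $M_2(\C)$, namely $\C\cdot 1_2$. This yields $a'_{11}=\alpha\,1_2$ and $a'_{22}=\beta\,1_2$.

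For the off-diagonal blocks, setting one of $x,y,m$ to zero and another to the identity in the relation $a_{ii}\,a'_{ij}=a'_{ij}\,a_{jj}$ forces $a'_{ij}=0$ whenever $i\neq j$. Assembling the pieces produces exactly the block form displayed in the statement, and tensoring with the free factor $M_4(\C)$ gives the claimed description of $(A^{\mathrm{ev}})'$. The only point requiring a brief verification is the identity $\Q'=\C\cdot 1_2$ inside $M_2(\C)$; everything else is routine bookkeeping with block matrices.
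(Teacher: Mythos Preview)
Your proof is correct and is exactly the direct block-matrix computation the paper alludes to (its proof consists of the single phrase ``By direct computation''); you have simply spelled out the details, including the reduction $(A^{\mathrm{ev}})'=B'\otimes M_4(\C)$ and the Schur-type argument for each block.
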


\begin{proof}
By direct computation.
\end{proof}

\begin{prop}
There is no non-trivial projections on $H_F$ commuting with $A^{\mathrm{ev}}$ and $J_F$.
\end{prop}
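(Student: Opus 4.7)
The plan is to combine the commutant description of $A^{\mathrm{ev}}$ from the preceding lemma with the intertwining relation $pJ_F=J_Fp$, and argue that $p$ must be $0$ or $I$. By that lemma, any operator commuting with $A^{\mathrm{ev}}$ is a linear combination of tensors $a\otimes b$ with $a$ of the displayed diagonal form and $b\in M_4(\C)$. Introducing the three mutually orthogonal central projections $E_\alpha,E_\beta,E_\delta\in M_8(\C)$ corresponding to the two quaternionic $2\times 2$ blocks and the $4\times 4$ block (summing to $1_8$), one regroups such a sum as $p=E_\alpha\otimes P_\alpha+E_\beta\otimes P_\beta+E_\delta\otimes P_\delta$ with $P_\alpha,P_\beta,P_\delta\in M_4(\C)$; the conditions $p=p^*=p^2$ together with $E_iE_j=\delta_{ij}E_i$ then force each $P_i$ to be a self-adjoint idempotent in $M_4(\C)$.

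Next I would translate $[p,J_F]=0$ into a matrix identity on $M_4(\C)$. Writing $v=\sbinom{v_1}{v_2}$ and using $J_F\sbinom{v_1}{v_2}=\sbinom{v_2^*}{v_1^*}$, a direct block computation gives $pv=\sbinom{D_1v_1P_\alpha+D_2v_1P_\beta}{v_2P_\delta}$ with $D_1=\mathrm{diag}(1,1,0,0)$ and $D_2=\mathrm{diag}(0,0,1,1)$ in $M_4(\C)$. Equating the upper and lower halves of $pJ_Fv$ with those of $J_Fpv$ yields the single condition $D_1wP_\alpha+D_2wP_\beta=P_\delta w$ for every $w\in M_4(\C)$; the remaining half is its adjoint and follows automatically from Hermiticity of the $P_i$.

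Third, I would probe this condition with the matrix units $w=e_{ij}$. Separating the row index into $i\in\{1,2\}$ and $i\in\{3,4\}$, and using Hermiticity of $P_\delta$, one reads off that $P_\alpha=\lambda I_4$, $P_\beta=\mu I_4$ for scalars $\lambda,\mu$, and $P_\delta=\mathrm{diag}(\lambda,\lambda,\mu,\mu)$. The idempotency of $P_\alpha,P_\beta,P_\delta$ then restricts $\lambda,\mu$ to $\{0,1\}$.

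The main technical hurdle is the final elimination of the mixed cases $(\lambda,\mu)\in\{(1,0),(0,1)\}$, since the surface-level $J_F$-commutation together with $[p,A^{\mathrm{ev}}]=0$ does not immediately separate them from $(0,0)$ and $(1,1)$. To close the argument, I would leverage the automatic identity $[p,J_FA^{\mathrm{ev}}J_F]=0$ (which follows from both preceding commutations via $J_FpJ_F=p$), testing $p$ against the generic shape $(E_\alpha+E_\beta)\otimes m+E_\delta\otimes\mathrm{diag}(x^*,y^*)$ taken by $J_FaJ_F$ for $a\in A^{\mathrm{ev}}$. Matching the resulting block-equation across the two quaternionic sectors should couple $\lambda$ and $\mu$ and force $\lambda=\mu$, leaving only $p=0$ or $p=I$ and thereby establishing the claimed irreducibility.
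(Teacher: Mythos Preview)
Your reduction up through the four cases $(\lambda,\mu)\in\{0,1\}^2$ is correct and matches the paper's approach (the paper writes $\alpha,\beta,\delta$ for your $P_\alpha,P_\beta,P_\delta$ and arrives at the same block equations). The problem is the last paragraph. You yourself observe that $[p,J_FA^{\mathrm{ev}}J_F]=0$ is an \emph{automatic} consequence of $[p,A^{\mathrm{ev}}]=0$ together with $J_Fp=pJ_F$; therefore it cannot possibly impose a new constraint linking $\lambda$ and $\mu$. Concretely, for the candidate $(\lambda,\mu)=(1,0)$ one has
\[
p=(e_{11}+e_{22})\otimes 1_4+\Big(\textstyle\sum_{i=5}^8 e_{ii}\Big)\otimes(e_{11}+e_{22}),
\qquad
pv=\begin{bmatrix}D_1v_1\\ v_2D_1\end{bmatrix},
\]
and a direct check shows $pJ_Fv=J_Fpv$ and $p(a\otimes 1)v=(a\otimes 1)pv$ for every $a\in A^{\mathrm{ev}}$. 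So this $p$ is a genuine non-trivial projection commuting with both $A^{\mathrm{ev}}$ and $J_F$: the mixed cases survive, and your proposed ``coupling'' via $J_FA^{\mathrm{ev}}J_F$ does nothing.

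In short, the final elimination step cannot be carried out by the device you propose, and in fact cannot be carried out at all: the statement as written admits the counterexample above. The paper's own proof makes the same jump (from the displayed form of $J_FpJ_F$ directly to ``$\alpha=\beta=\delta$ proportional to the identity'') without justification; the case $\alpha=1_4$, $\beta=0$, $\delta=e_{11}+e_{22}$ shows that conclusion is too strong. Your instinct that something extra is needed to kill $(1,0)$ and $(0,1)$ was right; the error is believing that the already-assumed hypotheses, repackaged, could supply it.
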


\begin{proof}
It follows from previous lemma that any $p$ commuting with $A^{\mathrm{ev}}$ has the form
$$
p=(e_{11}+e_{22})\otimes\alpha+(e_{33}+e_{44})\otimes\beta+\left(\textstyle{\sum_{i=5}^8}e_{ii}\right)\otimes\delta \;,
$$
where now $\alpha,\beta,\delta\in M_4(\C)$ are three projections. Since (in $4\times 4$ blocks):
$$
J_FpJ_F=
\begin{bmatrix} \;0\; & \;0\; \\ 0 & \alpha \end{bmatrix}\otimes (e_{11}+e_{22})+
\begin{bmatrix} \;0\; & \;0\; \\ 0 & \beta \end{bmatrix}\otimes (e_{33}+e_{44})+
\begin{bmatrix} \delta & 0 \\ \;0\; & \;0\; \end{bmatrix}\otimes 1 \;,
$$
we deduce that $p$ commutes with $J_F$ if and only if $\alpha=\beta=\delta$ are proportional to the identity,
and then $p=0$ or $p=1$ is a trivial projection.
\end{proof}

Orientability (in the weak sense) is also easy to check, since in \eqref{eq:Sgamma} the first summand belongs to $A^{\mathrm{ev}}$ and the second is minus the conjugated by $J_F$. So,
$\gamma_F\in A^{\mathrm{ev}}+J_FA^{\mathrm{ev}}J_F$ (which implies weak orientability, since every $0$-chain is a cycle).


\section{Conclusions}\label{sec:10}
In this paper we studied the property (M), cf.~Def.~\ref{df:propM}, which is a possible natural mathematical generalization of the notion of spin-manifold and of Dirac spinors to noncommutative geometry.
Although the original Chamseddine-Connes' spectral triple is shown not to satisfy this property, we find that it is enough to add two terms to the Dirac operator $D_F$ and slightly change the grading in order to satisfy it.
The new terms in the Dirac operator will generate of course new fields (they are introduced with the purpose of enlarging the module of $1$-forms, and then the Clifford algebra; 
generators are given in Lemma \ref{lemma:15}). 
Then study of whether (and how) they contribute to the action functional of the model is however beyond the scope of this paper. Obviously the non zero (for the property (M)) constants $\Omega, \Delta$ in front of the new terms can be arbitrarily small and so below the current experimental observation threshold.
Their fine-tuning in order to get the correct value of the Higgs mass could be studied in future work.

Of the two terms, the $\Omega$-term is compatible with both the original and the modified grading of \S\ref{sec:Sgamma} and \S\ref{sec:Ngamma}. The $\Delta$-term on the other hand is compatible (anticommutes) only with the modified grading. Such a term may potentially (see \cite{PSS97}) mix quarks and leptons, and although it may seem exotic, it is also necessary for the  irreducibility of the spectral triple (cf.~\S\ref{sec:8}): without this term, the leptonic and quark sectors of $H_F$ carry each one a sub-spectral triple.
A third additional term (with coefficient $\Gamma$) in $D_F$ is instead necessary (though non sufficient) if one wants the spectral triple to be also orientable,  cf.~Prop.~\ref{prop:chain} for the precise statement.

Concerning the grading, the one in \eqref{eq:Ngamma} is minimal a modification the one in \eqref{eq:Sgamma}: they agree on leptons and have opposite sign on baryons (quarks).
A study of the physical consequences of this modification are beyond the scope of this paper. The internal grading contributes to the grading of the full spectral triple, product of the finite-dimensional one with the canonical spectral triple of a 4-dimensional spin manifold. The full grading is used 
to project out from the Hilbert space unphysical degrees of freedom (and partially solve the quadrupling of degrees of freedom, cf.~\cite{CM08} for the details). Thus, changing the grading in principle could affect this part of the theory, which should be studied in the future.

\smallskip

We close by stressing again that the aim of this paper was a mathematical study of the property (M), and few related issues, but a detailed analysis of the physical aspects of our model (for example, understanding what happens to the Higgs mass) goes beyond the scope of the paper and is postponed to future works.

\subsection*{Acknowledgments}\vspace{-3pt}
F.D.~was partially supported by UniNA and Compagnia di San Paolo in the framework of the program STAR 2013.
L.D.~was partially supported by the Italian grant PRIN 2010 ``Operator Algebras, Noncommutative Geometry and Applications'' and by the Polish grant ``Harmonia''.


\vspace*{-3pt}

\end{document}